\documentclass[journal ,12]{IEEEtran}

\usepackage{caption}
\usepackage{pstricks,chngpage,balance}
\usepackage{subcaption}
\usepackage{amsmath,amsfonts,verbatim,amsopn,multirow}

\usepackage{xcolor}
\usepackage{algorithm}
\usepackage{array}
\usepackage{textcomp}
\usepackage{stfloats}
\usepackage{url}
\usepackage{verbatim}
\usepackage{graphicx}
\usepackage{cite}
\usepackage{bm} 
\usepackage{booktabs}
\usepackage{algorithm}
\usepackage{algpseudocode}
\usepackage{algorithm}
\usepackage{amssymb}
\usepackage{algpseudocode}
\usepackage{epsfig}
\usepackage{epstopdf}
\usepackage{amsmath}  
\usepackage[colorlinks=true, linkcolor=blue, citecolor=blue, urlcolor=blue]{hyperref}
\usepackage{pifont} 
\usepackage{booktabs}
\usepackage{amsmath}
\usepackage{caption}
\usepackage{subcaption}
\usepackage{makecell}
\usepackage{amsthm}

\newtheorem{proposition}{Proposition}

\newcommand{\Input}{\item[\textbf{Input:}]}
\newcommand{\Output}{\item[\textbf{Output:}]}

\usepackage[nodisplayskipstretch]{setspace}
\newtheorem{remark}{Remark}

\newcommand{\qa}{{\bf a}}
\newcommand{\qb}{{\bf b}}

\newcommand{\qe}{{\bf e}}
\newcommand{\qf}{{\bf f}}
\newcommand{\qg}{{\bf g}}
\newcommand{\qh}{{\bf h}}

\newcommand{\qn}{{\bf n}}

\newcommand{\qp}{{\bf p}}

\newcommand{\qs}{{\bf s}}

\newcommand{\qu}{{\bf u}}
\newcommand{\qv}{{\bf v}}
\newcommand{\qw}{{\bf w}}
\newcommand{\qx}{{\bf x}}
\newcommand{\qy}{{\bf y}}

\newcommand{\qA}{{\bf A}}

\newcommand{\qG}{{\bf G}}
\newcommand{\qH}{{\bf H}}
\newcommand{\qI}{{\bf I}}
\newcommand{\qJ}{{\bf J}}

\newcommand{\qN}{{\bf N}}

\newcommand{\qR}{{\bf R}}
\newcommand{\qS}{{\bf S}}

\newcommand{\qU}{{\bf U}}
\newcommand{\qV}{{\bf V}}

\newcommand{\qY}{{\bf Y}}

\newcommand{\tens}[1]{\boldsymbol{\mathcal{#1}}}

\newcommand{\pcj}{p^{\mathtt{c}}_{j}}
\newcommand{\pci}{p^{\mathtt{c}}_{i}}

\newcommand{\vrmT}{\qv_r^{(m)T}}
\newcommand{\vtmT}{\qv_t^{(m)T}}

\newcommand{\wpk}{\qw^{\mathtt{p}}_k}
\newcommand{\wcj}{\qw^{\mathtt{c}}_{j}}
\newcommand{\wcjH}{\qw^{\mathtt{c}H}_{j}}

\newcommand{\SINR}{\mathrm{SINR}}

\newcommand{\Thetat}{\mathbf{\Theta}_t}
\newcommand{\Thetar}{\mathbf{\Theta}_r}

\newcommand{\Kc}{\mathcal{K}_{c}}
\newcommand{\Kp}{\mathcal{K}_{p}}

\newcommand{\Ex}{\mathbb{E}}
\newcommand{\wkc}{\qw_{k_c}^{\mathtt{c}}}
\newcommand{\wkp}{\qw_{k_p}^{\mathtt{p}}}

\newcommand{\ppkp}{p^{\mathtt{p}}_{k_p}}
\newcommand{\pckc}{p^{\mathtt{c}}_{k_c}}
\newcommand{\pck}{p^{\mathtt{c}}_{k}}

\newcommand{\cmark}{\ding{51}} 
\newcommand{\xmark}{\ding{55}} 
\DeclareMathOperator{\tr}{\mathrm{tr}}

\begin{document}

\title{Hybrid STAR-RIS Architecture for Joint Localization, Communication, and Power Transfer}

\author{Haoran Ni,~\IEEEmembership{Graduate Student Member,~IEEE}, Mohammadali Mohammadi,~\IEEEmembership{Senior Member,~IEEE,}\\ Xidong Mu,~\IEEEmembership{Member,~IEEE,} Hien Quoc Ngo,~\IEEEmembership{Fellow,~IEEE,} and Michail Matthaiou,~\IEEEmembership{Fellow,~IEEE}\\
\thanks{ This work was supported by the UK Engineering and Physical Sciences
Research Council (EPSRC) grant EP/X04047X/2 for TITAN Telecoms Hub. This work of Xidong Mu was supported by the Royal Society under grant RG\textbackslash R1\textbackslash 251180.
The work of H. Q. Ngo was supported by the a research grant
from the Department for the Economy Northern Ireland under the US-Ireland
R\&D Partnership Programme, and MULTIPLY-6G project that has received funding from the Smart Networks and Services Joint Undertaking (SNS JU) under the European Union’s Horizon Europe research and innovation programme under Grant Agreement No 101293106. This work was also supported by the European
Research Council (ERC) under the European Union’s Horizon 2020 research
and innovation programme (grant agreement No. 101001331).}

\thanks{The authors are with the Centre for Wireless Innovation (CWI), Queen's University Belfast, BT3 9DT Belfast, U.K.
(email:\{hni02, m.mohammadi, x.mu, hien.ngo, m.matthaiou\}@qub.ac.uk).  }
\thanks{ Parts of this paper appeared at the 2025 IEEE SPAWC conference~\cite{Ni:Hybrid-STAR-RIS:SPAWC:2025}.
}}\normalsize

\maketitle

\begin{abstract}
We propose a hybrid simultaneously transmitting and reflecting reconfigurable intelligent surface (STAR-RIS) architecture with dynamically switched active and passive elements to support joint  localization, communication, and wireless power transfer (WPT). We first pursue a parallel factor analysis with the alternating least squares (PARAFAC-ALS)-based tensor decomposition approach that decouples the base station (BS)–reconfigurable intelligent surface (RIS) and RIS–user channels, thereby enabling low-overhead channel acquisition. Based on this, we formulate a system energy efficiency (EE) maximization problem, subject to the spectral efficiency (SE) requirements of communication users, sensing signal-to-interference-plus-noise ratio  constraints, and the nonlinear energy harvesting requirements of energy-harvesting users. The  optimization problem is nonconvex since the transmit power allocation, STAR-RIS coefficients, and active/passive mode assignments are tightly coupled in both the objective and constraints. We address this issue by alternating between two subproblems, and solving them via fractional programming, successive convex approximation and a multi-seed greedy strategy employed as an initialization step. Numerical results demonstrate that selectively activating a small, well-chosen subset of STAR-RIS elements achieves $1.5\times$ to $3\times$ EE improvements compared with fully passive/active architectures, while satisfying communication, sensing, and power-transfer requirements.
\end{abstract}

\begin{IEEEkeywords}
Energy efficiency (EE),  localization,   simultaneously transmitting and reflecting reconfigurable intelligent surface (STAR-RIS), wireless power transfer (WPT).
\end{IEEEkeywords}

\vspace{-1em}
\section{Introduction}
\IEEEPARstart{T}{he}  emergence of simultaneously  transmitting and reflecting  reconfigurable intelligent surfaces (STAR-RISs) marks a significant milestone in the evolution of sixth-generation (6G) wireless networks. Unlike conventional reconfigurable intelligent surfaces (RISs), which operate solely in reflection mode and thus cannot assist users or sensing targets located behind the surface, STAR-RISs enable full-space coverage by supporting controllable energy splitting between transmission and reflection regions. This capability introduces additional spatial degrees of freedom, paving the way for advanced designs in communication, localization, integrated sensing and communication (ISAC), and wireless power transfer (WPT) \cite{Liu:STAR:WCOM:2021,Ahmed:STAR-RIS:IOT:2023,Sui:STAR-RIS:TWC:2025}.

The evolution toward 6G wireless networks is driving a paradigm shift from conventional communication-centric architectures to multifunctional platforms \cite{Matthaiou:6G-Phy:COMMAG:2021} capable of simultaneously supporting communication, sensing, localization \cite{Lyu:CRB-RIS:IOT:2024}, and WPT services \cite{Mohammadi_cellfree_swipt:TWC:2025}. The practical relevance of such multifunctional operation is already evident in several emerging 6G application scenarios. For instance, mobility-aware Internet of Things (IoT) networks require the seamless integration of communication, sensing, and energy transfer functionalities to support energy-constrained devices operating in highly dynamic environments \cite{fan2025mobility}. Similarly, low-altitude economy networks involving uncrafted aerial vehicles (UAVs) and other aerial platforms rely on the joint provision of communication, sensing, and WPT services to ensure reliable connectivity, environmental awareness, and sustainable operation \cite{zhang2026intelligent}. These representative use cases underscore that the convergence of communication, sensing, and WPT is not merely a theoretical research direction but a fundamental requirement for meeting the diverse service demands envisioned for future 6G networks. In response to this trend, integrated sensing, communication, and powering (ISCAP) and integrated sensing, communication, and power transfer (ISCPT) have emerged as promising frameworks for the efficient integration of these functionalities through the shared utilization of spectrum, energy resources, and hardware infrastructures \cite{chen2025integrated,li2024integrating}.

\vspace{-0.3em}
\subsection{ Related Literature}
Despite these advances, the performance of integrated sensing, communication, and WPT systems remains highly dependent on the wireless propagation environment. In particular, sensing relies on reliable target illumination and echo reception, whereas WPT requires favorable channel conditions for efficient energy delivery to wireless devices. Consequently, non-line-of-sight (NLoS) propagation, severe path loss, and blockage can significantly impair the performance of sensing, communication, and WPT functionalities. To address these challenges, RISs and STAR-RISs have emerged as promising technologies for controlling the wireless propagation environment. By intelligently reconfiguring electromagnetic wave propagation, these technologies can enhance signal coverage, improve sensing performance, and facilitate efficient wireless energy delivery to communication, sensing, and energy-harvesting terminals~\cite{Ren:UAV-WPT:IOT:2023,Zhao:RIS-WPT:SYS:2021,MohammadiRIS_SWIPT_opt:TCOM:2025,Liu:ISAC:JSAC:2022,Zhang:WPT:TIE:2019,Hua:BD-RIS:TCOM:2025}.

\subsubsection{RIS-assisted  WPT, ISAC, and integrated ISAC-WPT systems}
Fu \textit{et al.}\cite{10155421} investigated a multi-active/passive-RIS-enabled simultaneous wireless information and power transfer (SWIPT) system and analyzed the impact of active RIS deployment on WPT performance under different network configurations.  Kang \textit{et al.} \cite{kang2024active} proposed a hybrid active-passive RIS architecture that combines the signal amplification capability of active elements with the energy-efficient reflection of passive elements, thereby enhancing wireless coverage and system capacity while accounting for practical power-consumption constraints.  Ata \textit{et al.} \cite{ata2025reflect} applied the hybrid active-passive RIS architecture to SWIPT systems, where passive elements are used for information transmission, while active elements amplify the incident signal to improve energy harvesting performance at low-power receivers. Yao \textit{et al.} \cite{yao2025hybrid} further extended the hybrid RIS architecture to secure ISAC systems. They jointly optimized multiple hybrid RISs, the transmit waveform, receive filters, and user signal-to-noise ratio (SNR) to improve extended target sensing performance and satisfy secure communication requirements. Lin \textit{et al.} \cite{lin2025joint} investigated a hybrid RIS-assisted ISAC system and jointly optimized the active/passive mode selection and beamforming design to maximize the minimum sensing beampattern gain among multiple targets. Finally, RIS-assisted ISAC-WPT systems have been proposed to intelligently reconfigure the wireless environment and enhance blocked links \cite{zhang2023performance}.

\subsubsection{STAR-RIS-assisted ISAC, WPT, and integrated ISAC-WPT systems}
Despite their significant potential, conventional RIS architectures are inherently constrained to operate in either the reflection or transmission half-space, which limits their achievable coverage and deployment flexibility. To overcome this limitation, STAR-RIS has emerged as a promising technology capable of simultaneously transmitting and reflecting incident signals, thereby enabling full-space wireless coverage and providing greater flexibility for supporting multifunctional 6G services~\cite{Mu:STAR:TWC:2022}. Motivated by these advantages, recent studies have increasingly explored STAR-RIS-enabled communication, sensing, and WPT frameworks. The STAR-RIS joint beamforming and coefficient optimization can significantly improve detection accuracy under communication constraints~\cite{Wang:TWC:2023}. In cluttered environments, the integration of signature-sequence modulation with passive STAR-RIS beamforming has been shown to enable reliable multi-target sensing~\cite{Zhang:STAR-RIS-ISAC:TCOM:2025}. Moreover, passive STAR-RIS-assisted integrated sensing, computing, and communication has been investigated for robotic networks, leveraging alternating optimization (AO) to jointly update the base station (BS) beamformer and STAR-RIS coefficients~\cite{Li:STAR-RIS-ISCC:IOT:2024}. Collectively, these works highlight that STAR-RISs extend conventional RIS-based ISAC by allowing independent control of transmitted and reflected signals, thereby offering greater flexibility for sensing-oriented system design.   In~\cite{Li:STAR-RIS-SWIPT:TWC:2024}, a passive STAR-RIS–enabled SWIPT framework was studied, where a multi-antenna access point served both communication users (CUs) and energy-harvesting users (EUs). Zhu \textit{et al.} \cite{Zhu:STAR-RIS-SWIPT:TWC:2025} compared passive and active STAR-RIS architectures and demonstrated that jointly optimizing the access point beamforming and STAR-RIS configuration can reduce the power consumption, with active surfaces being advantageous for small apertures, while passive designs become more efficient under larger apertures or stricter power budgets. The authors of \cite{Zhu:Robust-RA-STAR:TWC:2024} showed that STAR-RISs can enhance the rate–energy trade-off in SWIPT systems under imperfect channel state information (CSI) knowledge, outperforming conventional RISs across energy-splitting and time-switching strategies. Finally, \cite{Yaswanth:TCCN:2025} proposed an active STAR-RIS-assisted ISAC SWIPT framework that jointly optimizes the beamforming and RIS configurations under energy harvesting (EH) constraints.

\begin{table*}[t]
\centering
\scriptsize
\renewcommand{\arraystretch}{1.2}
\setlength{\tabcolsep}{3.5pt}
\caption{Comparison of our work with representative prior studies}
\vspace{-0.7em}
\label{tab:comparison}
\resizebox{\textwidth}{!}{
\begin{tabular}{l*{21}{c}}
\toprule
\textbf{Contributions} 
& \textbf{This} 
& {\cite{Sui:STAR-RIS:TWC:2025}}
& {\cite{Lyu:CRB-RIS:IOT:2024}}
& {\cite{Mohammadi_cellfree_swipt:TWC:2025}}
& {\cite{Ren:UAV-WPT:IOT:2023}}
& {\cite{Zhao:RIS-WPT:SYS:2021}}
& {\cite{Hua:BD-RIS:TCOM:2025}}
& {\cite{kang2024active}}
& {\cite{ata2025reflect}}
& {\cite{yao2025hybrid}}
& {\cite{lin2025joint}}
& {\cite{zhang2023performance}}
& {\cite{Mu:STAR:TWC:2022}}
& {\cite{Wang:TWC:2023}}
& {\cite{Zhang:STAR-RIS-ISAC:TCOM:2025}}
& {\cite{Li:STAR-RIS-ISCC:IOT:2024}}
& {\cite{Li:STAR-RIS-SWIPT:TWC:2024}}
& {\cite{Zhu:STAR-RIS-SWIPT:TWC:2025}}
& {\cite{Zhu:Robust-RA-STAR:TWC:2024}}
& {\cite{Yaswanth:TCCN:2025}}
& {\cite{Song:Active-STAR-RIS:TAP:2025}}
\\
& \textbf{paper} 
& & & & & & & & & & & & & & & & & & & &
\\
\midrule

STAR-RIS                     
& \cmark 
& \cmark 
& \xmark 
& \xmark 
& \xmark 
& \xmark 
& \xmark 
& \xmark 
& \xmark 
& \xmark 
& \xmark 
& \xmark 
& \cmark 
& \cmark 
& \cmark 
& \cmark 
& \cmark 
& \cmark 
& \cmark 
& \cmark 
& \cmark 
\\

Hybrid (active and passive elements) 
& \cmark 
& \xmark 
& \xmark 
& \xmark 
& \xmark 
& \xmark 
& \xmark 
& \cmark 
& \cmark 
& \cmark 
& \cmark 
& \xmark 
& \xmark 
& \xmark 
& \xmark 
& \xmark 
& \xmark 
& \xmark 
& \xmark 
& \xmark 
& \cmark 
\\

ISAC                         
& \cmark 
& \xmark 
& \cmark 
& \xmark 
& \xmark 
& \xmark 
& \xmark 
& \xmark 
& \xmark 
& \cmark 
& \cmark 
& \cmark 
& \xmark 
& \cmark 
& \cmark 
& \cmark 
& \xmark 
& \xmark 
& \xmark 
& \cmark 
& \cmark 
\\

SWIPT                        
& \cmark 
& \xmark 
& \xmark 
& \cmark 
& \xmark 
& \xmark 
& \cmark 
& \xmark 
& \cmark 
& \xmark 
& \xmark 
& \cmark 
& \cmark 
& \xmark 
& \xmark 
& \xmark 
& \cmark 
& \xmark 
& \xmark 
& \cmark 
& \xmark 
\\

Non-linear energy harvesting model          
& \cmark 
& \xmark 
& \xmark 
& \cmark 
& \xmark 
& \xmark 
& \cmark 
& \xmark 
& \xmark 
& \xmark 
& \xmark 
& \xmark 
& \xmark 
& \xmark 
& \xmark 
& \xmark 
& \xmark 
& \xmark 
& \xmark 
& \xmark 
& \xmark 
\\

Power allocation             
& \cmark 
& \xmark 
& \cmark 
& \cmark 
& \cmark 
& \cmark 
& \cmark 
& \cmark 
& \cmark 
& \cmark 
& \xmark 
& \cmark 
& \cmark 
& \xmark 
& \cmark 
& \cmark 
& \cmark 
& \cmark 
& \cmark 
& \xmark 
& \cmark 
\\

Active/passive element mode selection 
& \cmark 
& \xmark 
& \xmark 
& \xmark 
& \xmark 
& \xmark 
& \xmark 
& \xmark 
& \xmark 
& \xmark 
& \cmark 
& \xmark 
& \xmark 
& \xmark 
& \xmark 
& \xmark 
& \xmark 
& \xmark 
& \xmark 
& \xmark 
& \xmark 
\\

\bottomrule
\end{tabular}
}
\vspace{-1.5em}
\end{table*}

\vspace{-0.3em}
\subsection{Motivations and Contributions}
Although  fully passive/active STAR-RISs have been investigated in existing works, they both suffer from inherent limitations. Specifically, passive STAR-RISs, dominant in existing literature, suffer from severe cascaded fading, while  active STAR-RISs incur high power consumption and hardware noise~\cite{Song:Active-STAR-RIS:TAP:2025}. To this end, hybrid STAR-RISs offer a promising compromise and can underpin ISAC and WPT \cite{Vu:STAR-RIS-NOMA:COML:2025,Yigit:Hybrid-STAR-RIS-ISAC:TCOM:2025}. However, the potential of hybrid STAR-RISs to jointly support localization, communication, and WPT has not yet been fully investigated. This motivates the joint optimization of hybrid STAR-RIS configurations and resource allocation to balance the signal enhancement provided by active elements with the energy efficiency of passive elements. Developing such an integrated multi-functional framework remains challenging due to the complex interactions among these services.  First, developing a unified system model is challenging, as multiple functionalities must be jointly supported within a single architecture with well-defined performance metrics, such as quality of service (QoS) and EE. Second, channel estimation becomes challenging due to the cascaded channels formed by active and passive STAR-RIS elements.
As noted in~\cite{Liu:STAR:WCOM:2021}, these cascaded characteristics make accurate full CSI acquisition with low overhead particularly difficult. Consequently, some recent works adopted sensing models without dedicated target precoding or assumed idealized target-channel knowledge, thereby limiting practicality~\cite{Wang:TWC:2023}. Third, while active elements enhance link strength, they introduce additional noise and higher power consumption, necessitating a balance between performance and EE in hybrid STAR-RIS systems. Finally, jointly optimizing transmit power and STAR-RIS coefficients results in highly non-convex optimization problems that typically require advanced approximations or decomposition methods.

To this end, we herein design a novel system architecture, where the STAR-RIS comprises a mixture of active and passive elements, jointly serving sensing targets, CUs, and EUs. We depart from conventional fully passive or fully active STAR-RIS designs by investigating a hybrid STAR-RIS architecture with dynamically scheduled active and passive elements. Unlike conventional STAR-RIS designs that rely exclusively on either fully passive or fully active elements, we investigate a hybrid STAR-RIS architecture with dynamically scheduled active and passive elements. In addition, low-cost sensing units are mounted on the STAR-RIS to collect echo signals for  processing. The main contributions of this paper are summarized below and are also explicitly compared to the representative prior works in Table \ref{tab:comparison}.
\vspace{-0.2em}
\begin{itemize}
\item We propose a channel estimation approach based on a three-way tensor representation of uplink (UL) pilots. We design a parallel factor analysis (PARAFAC) alternating least squares (ALS) algorithm that efficiently decouples the cascaded BS-RIS-user channels without additional overhead. The factor matrices provide accurate estimates of the individual BS–STAR-RIS and STAR-RIS–user channels.

\item We develop a target localization algorithm, derive the sensing signal-to-interference-plus-noise ratio (SINR), and present a closed-form Cramér--Rao bound (CRB). Furthermore, we introduce a joint estimator that combines two-dimensional (2D) multiple signal classification (MUSIC) and frequency modulated continuous wave (FMCW) beat-frequency analysis to reliably recover both angular and distance information.

\item We formulate an EE maximization problem that jointly optimizes the transmit power, STAR-RIS coefficients, and mode scheduling for hybrid elements, subject to sensing threshold constraints, communication QoS requirements, and nonlinear EH constraints. To address the resulting complex non-convex optimization problem, we decompose it into two  subproblems. The non-convexity arises from the tight coupling among transmit power allocation, STAR-RIS coefficients, and active/passive mode assignments in the objective and constraints. We solve the two subproblems in an AO manner, using fractional programming, successive convex approximation (SCA) and a multi-seed greedy strategy for the active/passive mode assignment to ensure convergence and scalability.

\item Numerical results confirm the accuracy of the proposed channel estimation and localization algorithms. Furthermore, our analysis demonstrates the performance gains achieved by our optimization framework compared to benchmark models with random active-element selection and fully passive STAR-RIS configurations. Specifically, the proposed scheme achieves up to $6.8$-fold improvement over the fully passive design and $3.3$-fold over the hybrid design with random element selection.
\end{itemize}
\vspace{-1em}
\subsection{Organization and Notation}
The rest of the paper is organized as follows: Section \ref{Sec:system model} presents the hybrid STAR-RIS-enabled joint localization, communication, and WPT system. Section \ref{sec:Transmission Protocol} introduces the transmission protocol, while Section \ref{sec:Performance} defines performance metrics. Section \ref{sec: EE optimization} formulates the EE maximization problem and the AO framework. Section \ref{sec:Numerical} presents simulation results, and Section \ref{sec:Conclusion} concludes the paper.

\textit{Notation:} We use bold lowercase (uppercase) letters to denote vectors (matrices); the superscripts $(\cdot)^{T}$, $(\cdot)^{H}$, and $(\cdot)^{*}$ stand for transpose, Hermitian, and complex conjugate, respectively; the notation $\|\cdot\|_F^2$ denotes the squared Frobenius norm;
$[\qA]{:,n}$ and $[\qA]_{n,:}$ denote the $n$-th column and $n$-th row of matrix $\qA $;
$\mathbb{B}^{N \times N}$ denotes the set of $N \times N$ binary matrices;
$\mathbf{I}_N$ denotes the $N\times N$ identity matrix; $\mathrm{diag}\{\mathbf{a}\}$ denotes a diagonal matrix with diagonal elements from $\mathbf{a}$;
$\tr(\cdot)$ denotes the trace operator;  $\mathrm{vec}(\cdot)$ denotes the vectorization operator stacking matrix columns into a vector; the outer, Khatri–Rao, and Kronecker products are denoted by $\circ$, $\odot$, and $\otimes$, respectively; $\Re\{\cdot\}$ and $\Im\{\cdot\}$ denote real and imaginary parts, respectively. Moreover, $\mathcal{CN}(0,\sigma^2)$ denotes a complex circularly symmetric Gaussian random variable with variance $\sigma^2$. Finally,  
$\mathbb{E}\{\cdot\}$ stands for the statistical expectation. 

\section{system model}\label{Sec:system model}
We consider a hybrid STAR-RIS-enabled joint localization, communication, and WPT system. The BS, equipped with a uniform linear array (ULA) of $M$ antennas, serves $K_c$ CUs, $K_p$ EUs, and one single target. Let $\mathcal{K}_c = \{1,\ldots,K_c\}$ and $\mathcal{K}_p = \{1,\ldots,K_p\}$ denote the index sets of CUs and EUs, respectively. The STAR-RIS is implemented as a uniform planar array (UPA) with $N$ elements at half-wavelength spacing. To mitigate large-scale fading in echo processing, a low-cost UPA sensor with $N_s$ elements is integrated into the STAR-RIS~\cite{Wang:TWC:2023}. The target for localization is assumed to be located in the reflective region.
\begin{figure}[t]
\vspace{-1em}
    \centering
\includegraphics[width=0.45\textwidth]{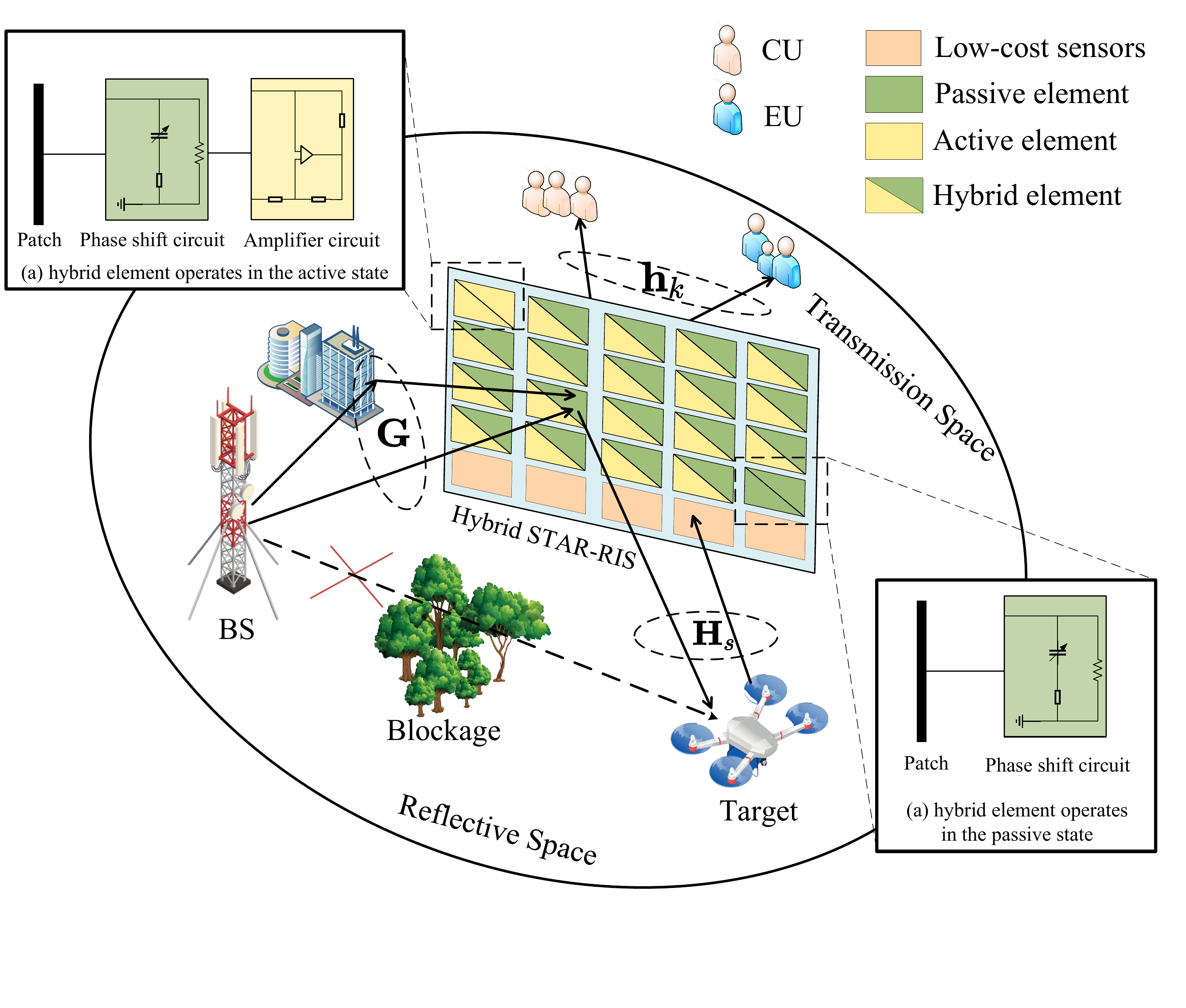}
\vspace{-2.3em}
    \caption{\small Hybrid STAR-RIS architecture for joint localization, communication, and WPT.}
    \label{fig:enter-label}
    \vspace{0em}
\end{figure}

\vspace{-1em}
\subsection{Hybrid Active-Passive STAR-RIS Model}
We set up an additional amplification layer between the reflection and transmission layers of the conventional STAR-RIS to enhance both the reflected and transmitted signals, where its operation is controlled in an ON/OFF manner. Specifically, each hybrid STAR-RIS element is equipped with a phase-shift circuit and a controllable switching circuit connected to an amplification branch \cite{Yigit:Hybrid-STAR-RIS-ISAC:TCOM:2025}. When the switch is turned on, the incident signal is routed through the amplifier, enabling active operation with both amplitude and phase control. When the switch is turned off, the amplification branch is bypassed, and the element operates as a passive STAR-RIS element with phase-only control \cite{kang2024active}. We denote $\mathbf{\Theta}_t \in\mathbb{C}^{N \times N}$ and $\mathbf{\Theta}_r \in\mathbb{C}^{N \times N}$ as the matrices of the transmission coefficients and reflection coefficients, respectively, which are modeled as,
\begin{equation}
    \mathbf{\Theta}_{k} =\operatorname{diag}\{\bm{\phi}_k\}= \operatorname{diag} \left\{ \beta_{1,k} e^{j \theta_{1,k}}, \ldots, \beta_{N,k} e^{j \theta_{N,k}} \right\},
\end{equation}
where $\theta_{n,k}\in(0,2\pi]$ and $\beta_{n,k}\in(0,1]$ are respectively the phase shift and amplitude of the $n$-th element of the STAR-RIS, while \( k \in \{t, r\} \) represents the transmission or reflection side. The ratio of active STAR-RIS elements to the total number of elements is defined as $\varrho = \frac{N_{\text{act}}}{N}$, with $0 \leq \varrho \leq 1$. We denote $\bm\alpha \in\mathbb{C}^{N \times 1} $ as the binary mode scheduling vector for the STAR-RIS elements, where $\alpha_n = 1$ indicates active mode and $\alpha_n = 0$ indicates passive mode. The positions of active elements are predefined in $\mathcal{A} \subset \{1,2,\ldots,N\}$ with $|\mathcal{A}| = N_{\mathrm{act}}$. Accordingly, the hybrid STAR-RIS mode scheduling across all elements is represented by the diagonal matrix as  
\begin{equation}
    \mathbf{A} = \operatorname{diag} \{\alpha_1, \alpha_2, \dots, \alpha_N\} \in \mathbb{B}^{N \times N}.
\end{equation} 
Moreover, we define
$\boldsymbol{\Upsilon}_{k} \triangleq \mathbf{A} \circ \mathbf{\Theta}_{k}$ and $ \mathbf{\Psi}_{k} \triangleq (\mathbf{I}_N - \mathbf{A}) \circ \mathbf{\Theta}_{k}$ to denote the diagonal active and passive coefficient matrices, respectively. According to the law of energy, each passive element  must satisfy \(|\beta_{n,t}|^2 + |\beta_{n,r}|^2 \le 1\)\cite{xu2021star}. For each active element \(n \in \mathcal{A}\), the magnitude of the complex coefficient satisfies \(|\beta_{n,k}| = \rho \leq \rho_{\max}\), where \(\rho_{\max}\) denotes the maximum amplification gain provided by the active load.

\vspace{-0.7em}
\subsection{CU and EU Channel Models}
We assume a Ricean fading channel model for all communication links. Specifically, the channel from the BS to the STAR-RIS and the channel from the STAR-RIS to the $k$-th user can be modeled as 
\begin{align}
    \qG &= \sqrt{\beta_{BR}/(1+\kappa)}\big( 
    \sqrt{\kappa} \qG^{\mathrm{LoS}} + 
     \qG^{\mathrm{NLoS}} 
    \big), 
\end{align}
and
\vspace{-0.5em}
\begin{align}    
    \qh_k &= \sqrt{\beta_{RU,k}/(1+\kappa)}\Big( 
    \sqrt{\kappa} \qh_k^{\mathrm{LoS}} + 
     \qh_k^{\mathrm{NLoS}} 
    \Big),
\end{align}
respectively, where \( {\beta_{BR}} \) and \( {\beta_{RU,k}} \) denote the large-scale fading coefficients between the BS and the STAR-RIS, and between the STAR-RIS and user $k\in\{\Kc,\Kp\}$, respectively. 
The parameter \( \kappa \) is the Ricean factor representing the power ratio between the line-of-sight (LoS) and non-line-of-sight (NLoS) components. 
The terms \( \qG^{\mathrm{NLoS}} \in \mathbb{C}^{N \times M} \) and \( \qh_k^{\mathrm{NLoS}} \in \mathbb{C}^{N\times 1} \) represent the NLoS components, whose entries are independent \( \mathcal{CN}(0,1) \) random variables (RVs). The LoS components of the channels are 
\begin{align}
    \qG^{\text{LoS}} &= \mathbf{b}_{\text{BR}}(\theta_{\text{A}}^{\text{BR}}, \psi_{\text{A}}^{\text{BR}})\, 
    \mathbf{a}_{\text{BR}}^{T}(\theta_{\text{D}}^{\text{BR}}), \\
    \qh_k^{\text{LoS}} &= \mathbf{a}_{\text{RU}}(\theta_{k,\text{D}}^{\text{RU}}, \psi_{k,\text{D}}^{\text{RU}}),
\end{align}
where \( \theta_{\text{D}}^{\text{BR}} \) denotes the angle of departure (AoD) from the BS; 
\( \theta_{\text{A}}^{\text{BR}} \) and \( \psi_{\text{A}}^{\text{BR}} \) denote the azimuth and elevation angles of arrival (AoA) at the STAR-RIS from the BS; \( \theta_{k,\text{D}}^{\text{RU}} \), \( \psi_{k,\text{D}}^{\text{RU}} \) denote the azimuth and elevation angles of departure from the STAR-RIS to user \( k \). Let \(N_h\) and \(N_v\) represent the numbers of STAR-RIS elements placed along the horizontal and vertical axes, respectively, forming a \(N=N_h \times N_v\) UPA.  We denote the corresponding element indices by \(n_h=1,\ldots,N_h\) and \(n_v=1,\ldots,N_v\).
Following \cite{Gao:Hybrid-Precoding:JSAC:2016}, we define $\mathbf{a}_{\text{RU}}(\theta_{k,\text{D}}^{\text{RU}}, \psi_{k,\text{D}}^{\text{RU}})\triangleq\frac{1}{\sqrt{N}}\mathbf{a}_{\text{RU},h}(\theta_{k,\text{D}}^{\text{RU}}, \psi_{k,\text{D}}^{\text{RU}})\otimes\mathbf{a}_{\text{RU},v}(\theta_{k,\text{D}}^{\text{RU}}, \psi_{k,\text{D}}^{\text{RU}})$ with $[\mathbf{a}_{\text{RU},h}(\theta_{k,\text{D}}^{\text{RU}}, \psi_{k,\text{D}}^{\text{RU}})]_{n_h}\triangleq e^{-j\pi(n_h-1)\sin\theta_{k,\text{D}}^{\text{RU}}\sin\psi_{k,\text{D}}^{\text{RU}} }$ and $[\mathbf{a}_{\text{RU},v}(\theta_{k,\text{D}}^{\text{RU}}, \psi_{k,\text{D}}^{\text{RU}})]_{n_v}\triangleq e^{-j\pi(n_v-1)\cos\psi_{k,\text{D}}^{\text{RU}}}$. Similarly,  
$\mathbf{b}_{\text{BR}}(\theta_{\text{A}}^{\text{BR}}, \psi_{\text{A}}^{\text{BR}})
\triangleq \frac{1}{\sqrt{N}}\mathbf{b}_{\text{BR},h}(\theta_{\text{A}}^{\text{BR}}, \psi_{\text{A}}^{\text{BR}})
\otimes \mathbf{b}_{\text{BR},v}(\theta_{\text{A}}^{\text{BR}}, \psi_{\text{A}}^{\text{BR}}),$
with $[\mathbf{b}_{\text{BR},h}(\theta_{\text{A}}^{\text{BR}}, \psi_{\text{A}}^{\text{BR}})]_{n_h}
\triangleq e^{-j\pi (n_h-1)\sin\theta_{\text{A}}^{\text{BR}}\sin\psi_{\text{A}}^{\text{BR}}},$
and $[\mathbf{b}_{\text{BR},v}(\theta_{\text{A}}^{\text{BR}}, \psi_{\text{A}}^{\text{BR}})]_{n_v}
\triangleq e^{-j\pi (n_v-1)\cos\psi_{\text{A}}^{\text{BR}}}.$  Finally, the BS ULA steering vector is given by  $\mathbf{a}_{\text{BR}}(\theta_{\text{D}}^{\text{BR}})\triangleq \frac{1}{\sqrt{M}}
    [ 1,\ e^{-j \pi \cos\theta_{\text{D}}^{\text{BR}}},\ \ldots,\ e^{-j \pi (M-1) \cos\theta_{\text{D}}^{\text{BR}}} ]^{T}.$ The BS, RIS, and sensor arrays are all configured with a half-wavelength inter-element spacing, i.e., $d=\lambda/2$, where $\lambda$ denotes the carrier wavelength.

\vspace{-1em}
\subsection{Sensing Channel Model}
The signal transmitted by the BS is first reflected by the STAR-RIS toward the target, and the echo is then captured by the low-cost UPA sensor integrated on the STAR-RIS. Thus, the sensing channel can be modeled as
\begin{equation}
\qH_s=\beta_s\qb_s(\theta,\psi)\ \qa^T_\text{RT}(\theta_{\text{D}}^{\text{RT}}, \psi_{\text{D}}^{\text{RT}})\mathbf{\Theta}_r\qG,
\end{equation}
where  
$\qH_{s}\in\mathbb{C}^{N_{s}\times M}$ is the composite sensing-echo channel from the BS to the STAR-RIS sensor array, while \(\theta\) and \(\psi\) denote the elevation and azimuth angles of the target with respect to the 
STAR-RIS sensing array.
The array response of the sensor array is defined as 
$\qb_{s}(\theta, \psi)
\triangleq \frac{1}{\sqrt{N_s}}\qb_{s,h}(\theta, \psi)
\otimes \qb_{s,v}(\theta, \psi)\in\mathbb{C}^{N_s\times1}$, 
with $[\qb_{s,h}(\theta, \psi)]_{n_h}
\triangleq e^{-j\pi (n_h-1)\sin\theta\sin\psi}$, while $[\qb_{s,v}(\theta, \psi)]_{n_v}
\triangleq e^{-j\pi (n_v-1)\cos\psi}$. Moreover,
$\qa_{\text{RT}}\!\bigl(\theta_{\text{D}}^{\text{RT}}, \psi_{\text{D}}^{\text{RT}}\bigr)\in\mathbb{C}^{N\times1}$ 
denotes the reflection-side steering vector of the STAR-RIS, where $\theta_{\text{D}}^{\text{RT}}$ and $\psi_{\text{D}}^{\text{RT}}$ 
are the azimuth and elevation departure angles from the STAR-RIS toward the target. 
Specifically, it is defined as 
$\qa_{\text{RT}}(\theta_{\text{D}}^{\text{RT}}, \psi_{\text{D}}^{\text{RT}})
\triangleq \frac{1}{\sqrt{N}}\qa_{\text{RT},h}(\theta_{\text{D}}^{\text{RT}}, \psi_{\text{D}}^{\text{RT}})
\otimes \qa_{\text{RT},v}(\theta_{\text{D}}^{\text{RT}}, \psi_{\text{D}}^{\text{RT}})$, 
with $[\qa_{\text{RT},h}(\theta_{\text{D}}^{\text{RT}}, \psi_{\text{D}}^{\text{RT}})]_{n_h}
= e^{-j\pi (n_h-1)\sin\theta_{\text{D}}^{\text{RT}}\sin\psi_{\text{D}}^{\text{RT}}}$ 
and $[\qa_{\text{RT},v}(\theta_{\text{D}}^{\text{RT}}, \psi_{\text{D}}^{\text{RT}})]_{n_v}
= e^{-j\pi (n_v-1)\cos\psi_{\text{D}}^{\text{RT}}}$.
The sensing signal round trip path loss is calculated as\cite{Yu:Active-RIS-ISAC:TCOM:2024},\cite{Shao:IRS-Sensing:JSAC:2022}
\begin{equation}\label{eq:betaS}
    \beta_s = \sqrt{\frac{\lambda^{2}\Lambda}{(4\pi)^{3} d_s^{4}}},
\end{equation}
where $\Lambda$ is the radar cross section (RCS), while $d_s$ is the distance 
between  the STAR-RIS and the sensing target. 

\begin{remark}
    The proposed sensing framework can be naturally extended to multipath and cluttered propagation environments by incorporating additional scattering, reflection, and cluttered components into the received sensing signal model\cite{wang2024received}. While the current formulation focuses on the dominant target echo, each NLoS path or cluttered component can be characterized by its own path gain, propagation delay, Doppler shift, and angular parameters. Consequently, the received sensing signal can be modeled as the superposition of the desired target echo, multi-path reflections, and clutter-induced interference, enabling the proposed framework to accommodate more realistic sensing scenarios.
\end{remark}

\vspace{-0.4em}
\section{Transmission Protocol: Uplink Training, Channel Estimation and Precoding Design}\label{sec:Transmission Protocol}
In this section, we introduce a transmission protocol framework that enables simultaneous localization, communication, and WPT. Building on this framework, we propose a tensor-based channel estimation algorithm to estimate the individual channels. 
\vspace{-1em}
\subsection{Transmit Signal Model}
The unified transmit signal generated at the BS to enable simultaneous localization,  communication and energy transfer at time slot $t$ is given by
\vspace{-0.2em}
\begin{align}~\label{eq:signal}
  \qx(t) \!= \!&\sum\nolimits_{k_c\in\Kc}\!\sqrt{\pckc} \wkc x^{\mathtt{c}}_{k_c}(t) + \sum\nolimits_{k_p\in\Kp}\!\!
  \sqrt{\ppkp}\wkp x^{\mathtt{p}}_{k_p}(t) \notag\\&+\sqrt{p_{s}}\qw_{s} x_s(t),  
\end{align}
where $\pckc$,  $\ppkp$ and $p_s$ denote the transmit powers allocated to CU, EU, and the sensing signal, respectively; $\qw^{\mathtt{c}}_{k_c}\in\mathbb{C}^{M\times 1}$ and $\qw^{\mathtt{p}}_{k_p}\in\mathbb{C}^{M\times 1}$ are the beamforming vectors associated with CU $k_c$ and EU $k_p$, respectively, with $\Ex\{\Vert \qw^{\mathtt{c}}_{k_c}\Vert^2\}=\Ex\{\Vert \qw^{\mathtt{p}}_{k_p} \Vert^2\}=1$; $x^{\mathtt{c}}_{k_c}(t)$ and $x^{\mathtt{p}}_{k_p}(t)$ are the communication and energy signals for CU $k_c$ and EU $k_p$, respectively, satisfying $\Ex\{x^{\mathtt{c}}_{k_c}(t)x^{\mathtt{c}^H}_{k_c}(t) \}=1$ and $\Ex\{x^{\mathtt{p}}_{k_p}(t)x^{\mathtt{P}^H}_{k_p}(t)\}=1$; $x_s(t)$ is the unit sensing signal, which is a FMCW signal as \cite{He:SLAM:SIGPROC:2023}
\begin{equation}
    x_s(t) =   e^{j \pi \frac{B_s}{T_s} t^2 }\label{st},
\end{equation}
where  \( B_s \) is the frequency sweep bandwidth, and \( T_s \) is the chirp duration.
The signal received by CU $k_c$  is given by
\begin{align} \label{eq:ykc}
    &\qy_{k_c}(t)=\sqrt{\pckc}\qh_{k_c}^T\Thetat\qG \qw^{\mathtt{c}}_{k_c} x^{\mathtt{c}}_{k_c}(t)\notag\\ &+\qh_{k_c}^T\Thetat\qG\sum\nolimits_{i\in\Kc\setminus k_c}\sqrt{\pci}\qw^{\mathtt{c}}_i x^{\mathtt{c}}_i(t) \nonumber\\
    &+ \qh_{k_c}^T\Thetat\qG\sum\nolimits_{k \in\Kp}\sqrt{p^{\mathtt{p}}_k}\wpk x^{\mathtt{p}}_k(t) \nonumber + \sqrt{p_s}\qh_{k_c}^T\Thetat\qG\qw_s x_s(t)\!\\
   & +\qh_{k_c}^T\boldsymbol{\Upsilon}_{t} \qv + n(t),
\end{align}
 where $\qv \sim \mathcal{CN}(\boldsymbol{0}, \sigma_v^2\qI_{N})$ denotes the noise generated by active  elements\cite{Long:Active-RIS:TWC:2021}, while $n(t) \sim \mathcal{CN}(0, \sigma^2)$ is the additive white Gaussian noise (AWGN) at CU $k_c$.

The  signal received at EU $k_p$ is given by
\vspace{-0.2em}
\begin{align}\label{recived signal eu}
y_{k_p}(t) &= \sum\nolimits_{i\in\Kc} \qh_{k_p}^T \mathbf{\Theta}_t \qG\sqrt{\pci} \qw^{\mathtt{c}}_i x^{\mathtt{c}}_i(t) 
+ \qh_{k_p}^T (\mathbf{A} \circ \mathbf{\Theta}_t) \qv
\nonumber\\
&
+ \sum\nolimits_{k\in\Kp} {\qh}_{k_p}^T \mathbf{\Theta}_t {\qG} \sqrt{p^{\mathtt{p}}_k}\qw^{\mathtt{p}}_k x^{\mathtt{p}}_k(t) \notag\\
&+\qh_{k_p}^T \mathbf{\Theta}_t \qG \sqrt{p_s}\qw_s x_s(t)+ n(t).
\end{align}

The  echo signal received by the sensors at the STAR-RIS at time slot $t$ is 
\vspace{-0.2em}
\begin{align}
    \qy_s(t)&=\sqrt{p_{s}}\qH_s\qw_{s} x_s(t)+\sum\nolimits_{k_c\in\Kc}\!\sqrt{\pckc}\qH_s \wkc x^{\mathtt{c}}_{k_c}(t) \notag \\
    &\quad +\sum\nolimits_{k_p\in\Kp}\!\!
  \sqrt{\ppkp}\qH_s\wkp x^{\mathtt{p}}_{k_p}(t) + \qn_s(t)\notag \\
  &=\qH_s \qs(t)+ \tilde{\qn}_s(t),
\end{align}
where $\qs(t)=\sqrt{p_{s}}\qw_{s} x_s(t)$; $\qn_s(t)$ is the AWGN at the sensors, while
\vspace{0.2em}
\begin{align}   \tilde{\qn}_s(t)&\triangleq\sum\nolimits_{k_c\in\Kc}\!\sqrt{\pckc}\qH_s \wkc x^{\mathtt{c}}_{k_c}(t)\nonumber\\
&+\sum\nolimits_{k_p\in\Kp}\!\!
  \sqrt{\ppkp}\qH_s\wkp x^{\mathtt{p}}_{k_p}(t) + \qn_s(t), 
\end{align}
represents the equivalent sensing noise, which consists of $\qn_s(t)$ and the interference caused by the communication and energy signals.

\vspace{-1.5em}
\subsection{Channel Estimation via Uplink Training}
We consider an UL frame structure comprising \( P \) consecutive time slots dedicated to UL training, during which the STAR-RIS applies a distinct fully passive phase configuration \(\bm{\phi}_{t,p}^{\mathrm{ul}} \) in each slot, for \( p = 1, \dots, P \). 
Let $K_u\triangleq K_c+K_p$ users simultaneously transmit UL pilot signals to the BS. The pilot sequences assigned to the users are assumed to be mutually orthogonal; hence, the \( K_u \times P \) pilot matrix \( \mathbf{X}_P \) satisfies \( \mathbf{X}_P\mathbf{X}_P^H  = \mathbf{I}_{K_u} \), where \( P \geq K_u \) is the pilot length. It is assumed that all users transmit the same pilot sequences in each of the $P$ training time slots. Therefore, the pilot signal received at the BS during time slot $p$ is given by
\vspace{0.3em}
\begin{equation}
    \mathbf{Y}_p=\qG^T\operatorname{diag}(\bm{\phi}_{t,p}^{\mathrm{ul}})\qH^T\mathbf{X}_P +\mathbf{N}_p,
\end{equation}
where $\mathbf{N}_p \in \mathbb{C}^{M \times P}$ denotes the AWGN matrix at the BS during time slot $p$, whose entries follow the complex Gaussian distribution $\mathcal{CN}(0, \sigma^2)$.
By removing the pilot symbols after right-multiplying $\mathbf{Y}_p$ with $\mathbf{X}_P^H$, we obtain
\vspace{0.3em}
\begin{equation}
    \mathbf{Z}_p = \qG^T \operatorname{diag}(\bm{\phi}_{t,p}^{\mathrm{ul}}) \qH^T + \tilde{\mathbf{N}},
\end{equation}
where $\tilde{\mathbf{N}} = \mathbf{N}_p \mathbf{X}_P^H \in \mathbb{C}^{M \times K_u}$ denotes the equivalent noise matrix.
After receiving the signals across all $P$ time slots, we can construct a three-dimensional tensor $\tens{Z} \in \mathbb{C}^{M \times K_u \times P}$ by
\vspace{0.2em}
\begin{equation}
    \tens{Z}(:,:,p) = \mathbf{Z}_p, \quad p = 1, 2, \dots, P.
\end{equation}

Let $\qG^T = [\qg_1, \qg_2, \dots, \qg_N] \in \mathbb{C}^{M \times N}$ and $\qH^T = [\qh_1, \qh_2, \dots, \qh_{(K_c+K_p)}]^T \in \mathbb{C}^{N \times (K_c+K_p)}$, where $\qg_n \in \mathbb{C}^{M\times1}$ and $\qh_n \in \mathbb{C}^{K_u\times1}$, denote the $n$th spatial and user-related components, respectively. Let $\boldsymbol{\phi}_{t,n}^{\mathrm{ul}} = [\phi^{\mathrm{ul}}_{1}, \phi^{\mathrm{ul}}_{2}, \dots, \phi^{\mathrm{ul}}_{N}]^T \in \mathbb{C}^{N\times1}$ represent the STAR-RIS phase configuration at time slot $n$. We define $\boldsymbol{\phi}^{\mathrm{ul}}_{t,p} = [\phi^{\mathrm{ul}}_{1}, \phi^{\mathrm{ul}}_{2}, \dots, \phi^{\mathrm{ul}}_{P}]^T \in \mathbb{C}^{P\times1}$ as  the $p$-th STAR-RIS element over all $P$ slots, and collect them as $\bm{\Phi}^{\mathrm{ul}}_t = [\bm{\phi}^{\mathrm{ul}}_{t,1}, \bm{\phi}^{\mathrm{ul}}_{t,2}, \dots, \bm{\phi}^{\mathrm{ul}}_{t,N}] \in \mathbb{C}^{P \times N}$.
With these definitions, the received tensor $\tens{Z} \in \mathbb{C}^{M \times K_u \times P}$ can be represented via a rank-$N$ PARAFAC decomposition as
\begin{equation}
    \tens{Z} = \sum\nolimits_{n=1}^{N} \qg_n \circ \qh_n \circ \bm{\phi}^{\mathrm{ul}}_{t,n} + \mathcal{N},
\end{equation}
where $\mathcal{N} \in \mathbb{C}^{M \times K_u \times P}$ denotes the additive noise tensor. The SNR of the received tensor signal is defined as $\mathrm{SNR} =\frac{\left\|\tens{Z}_0\right\|_F^2}
{\left\|\mathcal{N}\right\|_F^2}.$ And $\tens{Z}_0$ denotes the noise-free received tensor signal. 
By extracting the horizontal, lateral, and frontal slices of the tensor $\tens{Z} \in \mathbb{C}^{M \times K_u \times P}$, corresponding to fixing the first, second, and third dimensions, respectively, we obtain the mode-1, mode-2, and mode-3 unfoldings of the tensor as:
\vspace{0.2em}
\begin{subequations}
   \begin{align}
    \mathbf{Z}_{(1)} &= \qG (\qH \odot \bm{\Phi}^{\mathrm{ul}}_t)^T + \mathbf{N}_{(1)} , \\
    \mathbf{Z}_{(2)} &= \qH (\qG \odot \bm{\Phi}^{\mathrm{ul}}_t)^T + \mathbf{N}_{(2)}, \\
    \mathbf{Z}_{(3)} &= \bm{\Phi}^{\mathrm{ul}}_t (\qG \odot \qH)^T + \mathbf{N}_{(3)},
\end{align} 
\end{subequations}
where $\mathbf{Z}_{(1)} \in \mathbb{C}^{M \times K_u P}$, 
$\mathbf{Z}_{(2)} \in \mathbb{C}^{K_u \times MP}$, and 
$\mathbf{Z}_{(3)} \in \mathbb{C}^{P \times M K_u}$;  $\mathbf{N}_{(1)}$, $\mathbf{N}_{(2)}$, and $\mathbf{N}_{(3)}$ denote the mode-1, mode-2, and mode-3 unfoldings of the noise tensor $\mathcal{N} \in \mathbb{C}^{M \times K_u \times P}$, respectively, corresponding to the additive noise in the horizontal, lateral, and frontal slices.

 We build upon the PARAFAC-based modeling in STAR-RIS-assisted signal processing and propose an iterative channel estimation algorithm based on ALS. The channel factors are estimated by minimizing the following Frobenius norm:
\begin{equation}
    \min_{\hat{\qG},\, \hat{\qH}} 
    \Big\| \tens{Z} - \sum\nolimits_{n=1}^{N} \hat{\qg}_n \circ \hat{\qh}_n \circ {\bm{\phi}}^\text{ul}_{t,n} \Big\|_F^2,
\end{equation}
where $\tens{Z} $ denotes the observed tensor, while $\hat{\qg}_n$ and $\hat{\qh}_n$ are the estimated rank-one components. The detailed procedure is provided in \textbf{Algorithm~\ref{alg:parafac}.}

\begin{algorithm}[t]
\caption{PARAFAC-ALS-based channel estimation}
\label{alg:parafac}
\begin{algorithmic}[1]  
\Input Received tensor $\tens{Z} \in \mathbb{C}^{K_u \times M \times P}$, STAR-RIS UL training phase matrix $\bm{\Phi}_t^{\mathrm{ul}} \in \mathbb{C}^{P \times N}$, max number of iterations $i_\mathrm{max}$, tolerance $\epsilon$.
\Output Estimated channels $\hat{\qG} \in \mathbb{C}^{N \times M}$, $\hat{\qH} \in \mathbb{C}^{K_u \times N}.$

\State Initialize $\hat{\qH}^{(1)}$ and $\hat{\qG}^{(1)}$
\For{$i = 2$ to $i_\mathrm{max}$}
    \State Compute Khatri-Rao product: $\mathbf{A}_2 = \bm{\Phi}_t^{\mathrm{ul}} \odot \hat{\qH}^{(i-1)}$;
    \State Update $\hat{\qG}^{(i)} = \left( \mathbf{A}_2^H \mathbf{A}_2 \right)^{-1} \mathbf{A}_2^H \mathbf{Z}_{(2)}$;
    \State Normalize each row of $\hat{\qG}^{(i)}$ by its first entry to remove scaling ambiguity;
    \State Compute Khatri-Rao product: $\mathbf{A}_1 = \bm{\Phi}_t^{\mathrm{ul}} \odot \left( \hat{\qG}^{(i)} \right)^T$
    \State Update $\hat{\qH}^{(i)} = \left( \mathbf{A}_1^H \mathbf{A}_1 \right)^{-1} \mathbf{A}_1^H \mathbf{Z}_{(1)}^T$
    \State Compute the fitting error $e_i$ as the Frobenius norm of the reconstruction residual: $e_i = \left\| \tens{Z} - \sum\nolimits_{n=1}^{N} \qg_n \circ \qh_n \circ \bm{\phi}^{\mathrm{ul}}_{t,n} \right\|_F^2$
    \State Compute relative error: $\delta_i = \frac{e_{(i-1)} - e_i}{e_i}$
    \If{$|\delta_i| < \epsilon$}
        \State \textbf{break}
    \EndIf
\EndFor
\State \Return Final estimates $\hat{\qG} = \hat{\qG}^{(i)}$, $\hat{\qH} = \hat{\qH}^{(i)}$.
\end{algorithmic}
\end{algorithm}
\setlength{\textfloatsep}{0.25cm}

\vspace{-0.5em}
\subsection{Precoding Design}
In the proposed downlink (DL) transmission, distinct precoding strategies are adopted for CUs, EUs, and the sensing target. For CUs, zero-forcing (ZF) precoding is used to mitigate multi-user interference, while for EUs, maximum-ratio transmission (MRT) is used  to maximize the received power~\cite{MohammadiRIS_SWIPT_opt:TCOM:2025}. Accordingly, we have
\vspace{0.3em}
\begin{subequations}
    \begin{align}
    \qw^{\mathtt{p}}_{k} &=\alpha_{k}^{\mathrm{MRT}}\hat{\qf}_k^{H},\quad k \in\mathcal{K}_p, \\
     \qw^{\mathtt{c}}_{k} &=\alpha_{k}^{\mathrm{ZF}} \hat{\qH}^H \big( \hat{\qH}\hat{\qH}^H \big)^{-1}\qe_{k}, \quad k\in \mathcal{K}_c,
\end{align}
\end{subequations}
where $\hat{\qf}_k\triangleq\hat{\qh}_{k}^{T}\mathbf{\Theta}_t\hat{\qG}$; $\alpha_{k}^{\mathrm{MRT}}=1/\|\hat{\qf}_k^{H}\|$; $\hat{\qH}=[\hat{\qf}_k: \forall k\in\mathcal{K}_c]$; $\alpha_{k}^{\mathrm{ZF}}=\sqrt{\tr(( \hat{\qH}\hat{\qH}^H )^{-1})}$; $\qe_k$ is the $k$-th column of  $\qI_{M}$.

For target sensing, CSI is generally unavailable due to the unknown target location; thus, we employ a power-maximizing approach by maximizing the signal power impinging on the STAR-RIS, formulated as a Rayleigh quotient problem. The solution corresponds to the right singular vector of $\mathbf{\hat{G}}$ associated with its largest singular value, as
\vspace{0.2em}
\begin{equation}
  \qw_s = \frac{\qv_1}{\|\qv_1\|}, 
\end{equation}
where $\mathbf{\hat{G}} = \mathbf{U}\boldsymbol{\Sigma}\qV^H$ and $\qv_1 =[\qV]_{:,1}$. The resulting precoders are subsequently utilized in the optimization framework. In addition, the STAR-RIS further enhances sensing performance through the optimization of its phase-shift coefficients, as detailed in Section V. 

\vspace{-0.5em}
\section{Performance Metrics} \label{sec:Performance}
To comprehensively evaluate the proposed multifunctional framework, several performance metrics are considered, each corresponding to a different system functionality. Specifically, the SE and communication SINR characterize communication performance, while the sensing SINR and CRB quantify sensing reliability and localization accuracy. In addition, the harvested energy measures the effectiveness of WPT, whereas the EE captures the tradeoff between communication throughput and total power consumption. It is important to note that these metrics are coupled through the STAR-RIS coefficients, active/passive mode scheduling, beamforming design, and transmit power allocation. As a result, improving one functionality may positively/negatively affect the performance of the others. Therefore, a joint analysis of these metrics is necessary to characterize the communication--sensing--powering tradeoffs and to design an energy-efficient multifunctional STAR-RIS-assisted system. 

\vspace{-0.7em}
\subsection{Localization: Performance Evaluation and Approach}
\subsubsection{Sensing SINR}
The quality of target  localization is evaluated through the sensing SINR, which characterizes the strength of the useful echo relative to the interference and noise\cite{Wang:TWC:2023}. Specifically, the sensing SINR is expressed as \eqref{SINR_s} on the top of next page, where the numerator represents the power of the echo signal reflected by the target and captured by the STAR-RIS sensors. The denominator accounts for several sources of impairments, including the interference from communication and power transfer signals, the additional amplification noise introduced by active STAR-RIS elements, and the thermal noise at the sensors. This metric directly reflects the reliability of sensing and localization, and will serve as a key constraint in the subsequent resource allocation and optimization framework to ensure robust sensing performance.
\begin{figure*}[t]  
    \begin{equation}\label{SINR_s}
    \mathrm{SINR}_{s}=
\frac{p_s\Vert\qH_s\qw_s\Vert^2}{\sum_{i \in\Kc}p^{\mathtt{c}}_i\Vert\qH_s \qw^{\mathtt{c}}_{i}\Vert^2 +\sum_{k \in\Kp}p^{\mathtt{p}}_k\Vert\qH_s\wpk\Vert^2 + \beta_s^2\sigma_v^2\Vert\qb_s\qa^T_{\text{RT}}\mathbf{A} \circ \mathbf{\Theta}_{r}\Vert^2 + \sigma^2}
    \end{equation}
            \vspace{-1.3em}
        \hrulefill  
\end{figure*}

\vspace{-0.6em}
\subsubsection{CRB Derivation}
We evaluate the sensing performance of the proposed hybrid STAR-RIS-assisted system by deriving the CRB for the estimation of the target's 2D AoD parameters.
The sensors collect the echo signals across $T$ sensing snapshots, with each snapshot representing one observation of the received echo signal, expressed as
\begin{equation}\label{ys_T}
    \qY_s=\qH_s \qS + \tilde{\qN}_s,
\end{equation}
where
\begin{align*}
    \qY_s&=[ \qy_s(1),\qy_s(2),\dots\qy_s(T)]\in\mathbb{C}^{N_s \times T},\\
    \qS&=[s(1),s(2),\dots,s(T)]\in\mathbb{C}^{M\times T},\\
    \tilde{\qN}_s&=[\tilde{\qn}_s(1),\tilde{\qn}_s(2),\dots,\tilde{\qn}_s(T)]\in\mathbb{C}^{N_s \times T}.
\end{align*}
The corresponding time delay due to the distance between the target and  sensors is  $\tau$. The received echo signal is $s(t-\tau)=e^{j(2\pi f_0 (t-\tau) + \pi \frac{B_s}{T_s} (t-\tau)^2)}$ where \( f_0 \) is the starting carrier frequency.
Define the unknown parameter vector as
\begin{equation}
    \boldsymbol{\zeta} =
    \begin{bmatrix}
    \theta & \psi & \tau&\Re\{\beta_s\} & \Im\{\beta_s\}
    \end{bmatrix}^T \in \mathbb{R}^{5\times 1}.
\end{equation}
Let $\beta_s = \beta_R + j\beta_I$, where 
$\beta_R = \Re\{\beta_s\},~ \beta_I = \Im\{\beta_s\}$.  Thus, the Fisher Information
Matrix (FIM) is given by
\vspace{0.2em}
\begin{equation}
\mathbf{J}(\boldsymbol{\zeta}) =
\begin{bmatrix}
J_{\theta\theta} & J_{\theta\psi} & J_{\theta\tau} & J_{\theta\beta_R} & J_{\theta\beta_I} \\
J_{\psi\theta} & J_{\psi\psi} & J_{\psi\tau} & J_{\psi\beta_R} & J_{\psi\beta_I} \\
J_{\tau\theta} & J_{\tau\psi} & J_{\tau\tau} & J_{\tau\beta_R} & J_{\tau\beta_I} \\
J_{\beta_R\theta} & J_{\beta_R\psi} & J_{\beta_R\tau} & J_{\beta_R\beta_R} & J_{\beta_R\beta_I} \\
J_{\beta_I\theta} & J_{\beta_I\psi} & J_{\beta_I\tau} & J_{\beta_I\beta_R} & J_{\beta_I\beta_I}
\end{bmatrix}.
\end{equation}
\begin{proposition}
Accordingly, the CRB of estimating target location can be expressed as
\vspace{0.2em}
\begin{equation}
\mathrm{CRB}=\tr(\mathbf{J}^{-1}(\boldsymbol{\zeta})).
\end{equation}
\end{proposition}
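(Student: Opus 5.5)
Our plan is to treat the statement as an instance of the classical Cramér--Rao inequality for a complex Gaussian observation model whose mean depends on $\boldsymbol{\zeta}$. Starting from \eqref{ys_T}, we vectorize the snapshots as $\mathbf{y}=\mathrm{vec}(\qY_s)=\boldsymbol{\mu}(\boldsymbol{\zeta})+\mathrm{vec}(\tilde{\qN}_s)$, with $\boldsymbol{\mu}(\boldsymbol{\zeta})=\mathrm{vec}(\qH_s\qS_\tau)$. Here $\qS_\tau$ collects the delayed FMCW samples $s(t-\tau)$, so that $\tau$ enters the mean. The equivalent noise $\tilde{\qn}_s(t)$ is the sum of the thermal noise and the communication/energy interference. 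We approximate it as circularly symmetric Gaussian, white over snapshots, with covariance $\mathbf{R}=\sigma_n^2\qI_{N_sT}$. Here $\sigma_n^2$ is the per-entry equivalent noise power, which matches the denominator of \eqref{SINR_s}. We take $\mathbf{R}$ to be independent of $\boldsymbol{\zeta}$; this is the modelling assumption under which the Slepian--Bangs formula applies.

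We then invoke the Slepian--Bangs formula for the entries of $\mathbf{J}(\boldsymbol{\zeta})$:
\begin{equation*}
J_{\zeta_i\zeta_j}=\frac{2}{\sigma_n^2}\Re\Big\{\frac{\partial\boldsymbol{\mu}^H}{\partial\zeta_i}\frac{\partial\boldsymbol{\mu}}{\partial\zeta_j}\Big\}.
\end{equation*}
Using $\qH_s=\beta_s\qb_s(\theta,\psi)\qa_{\text{RT}}^T\mathbf{\Theta}_r\qG$ and writing $c(t)=\qa_{\text{RT}}^T\mathbf{\Theta}_r\qG\,\qs(t-\tau)$, the derivatives are as follows.
\begin{itemize}
\item With respect to $\theta$ and $\psi$: $\beta_s\,\partial\qb_s/\partial\theta\; c(t)$ and $\beta_s\,\partial\qb_s/\partial\psi\; c(t)$. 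These follow from the Kronecker structure of $\qb_s$, e.g. $\partial\qb_{s,h}/\partial\theta=-j\pi\cos\theta\sin\psi\,\mathrm{diag}(0,\dots,N_h-1)\qb_{s,h}$.
\item With respect to $\tau$: $\beta_s\qb_s\,\partial c(t)/\partial\tau$, where $\partial s(t-\tau)/\partial\tau=-j(2\pi f_0+2\pi\frac{B_s}{T_s}(t-\tau))s(t-\tau)$.
\item With respect to $\beta_R$ and $\beta_I$: $\qb_s c(t)$ and $j\qb_s c(t)$.
\end{itemize}
Summing over $t=1,\dots,T$ gives each entry of $\mathbf{J}(\boldsymbol{\zeta})$ in closed form.

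The CRB theorem then states that any unbiased estimator $\hat{\boldsymbol{\zeta}}$ satisfies $\Ex\{(\hat{\boldsymbol{\zeta}}-\boldsymbol{\zeta})(\hat{\boldsymbol{\zeta}}-\boldsymbol{\zeta})^T\}\succeq\mathbf{J}^{-1}(\boldsymbol{\zeta})$. Taking the trace bounds the total mean-squared error of the location-related parameters, which is the quantity denoted $\mathrm{CRB}=\tr(\mathbf{J}^{-1}(\boldsymbol{\zeta}))$. If only $(\theta,\psi,\tau)$ are of interest, we would instead take the Schur complement that eliminates the nuisance gain $\beta_s$ and use its trace; the version stated keeps all five parameters.

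The main obstacle is to justify that $\mathbf{J}(\boldsymbol{\zeta})$ is nonsingular. We plan to argue this from linear independence of the five derivative vectors. First, $\partial\qb_s/\partial\theta$ and $\partial\qb_s/\partial\psi$ are independent of each other and of $\qb_s$ for generic angles when $N_h,N_v\ge2$. Second, the delay derivative introduces a time-varying factor that is not proportional to $c(t)$ over $T\ge2$ snapshots. Third, the real and imaginary gain directions differ by the factor $j$, which $\Re\{\cdot\}$ separates. A secondary difficulty is the Gaussian approximation of the interference term in $\tilde{\qN}_s$; we will state this explicitly as an assumption rather than prove it.
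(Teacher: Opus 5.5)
Your proposal is correct and follows essentially the same route as the paper's appendix. Both use a vectorized complex Gaussian echo model with $\boldsymbol{\zeta}$-independent white covariance, apply the Slepian--Bangs formula with derivatives taken through $\qb_s(\theta,\psi)$ for the angles, through the delayed FMCW samples in $\qS$ for $\tau$, and along $[1,j]$ for the gain, and then take $\mathrm{CRB}=\tr(\mathbf{J}^{-1}(\boldsymbol{\zeta}))$. The differences are minor: the paper uses the sensor noise power $\sigma_s^2$ and approximates the snapshot sums by $T\qR$, whereas you fold the interference into an equivalent $\sigma_n^2$, keep exact sums, and also argue that $\mathbf{J}$ is invertible, a point the paper does not address. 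That invertibility argument correctly relies on a nonzero chirp slope, so that the $\tau$-direction is not a complex multiple of the gain directions.
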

\begin{proof}
    The  detailed expressions of the FIM elements are derived in Appendix \ref{pro_CRB}. 
\end{proof}

\subsubsection{Target Location Estimation}
We estimate the target location by combining the angular information from the MUSIC algorithm with the range information extracted from the FMCW waveform\cite{Ni:Hybrid-STAR-RIS:SPAWC:2025}. The echo signals at the STAR-RIS sensors form a covariance matrix, whose eigen decomposition yields the noise subspace $\mathbf{E}_n$. The MUSIC spectrum is then computed as
\vspace{0.2em}
\begin{equation}
    P_{\text{MUSIC}}(\theta,\psi)
    = \frac{1}{\left| \mathbf{b}_{s}^H(\theta,\psi)\,
    \mathbf{E}_n \mathbf{E}_n^H\,
    \mathbf{b}_{s}(\theta,\psi) \right|},
\end{equation}
while the target angles are obtained by
\vspace{0.3em}
\begin{equation}
 (\hat{\theta}, \hat{\psi})
 = \arg\max_{\theta,\psi} P_{\text{MUSIC}}(\theta,\psi).
\end{equation}
For range estimation, the FMCW waveform produces a beat signal whose frequency encodes the round trip time delay. The received beat signal can be written as
\vspace{0.2em}
\begin{equation}
\qs_{b}(t)
=\qH_s\sqrt{p_s}\qw_s
e^{-j2\pi\left(f_0 \tau + \frac{B_s}{T_s} t \tau - \frac{B_s}{2T_s} \tau^{2}\right)}.
\end{equation}
By applying a Fast Fourier Transform  to the received FMCW beat signal, the beat frequency $f_b$ is estimated and mapped to the round-trip delay as $f_b = \frac{B_s}{T_s}\tau$. The corresponding delay $\hat{\tau} = \frac{T_s}{B_s}f_b$ yields the target distance estimate $\hat{D}_t = \frac{c\hat{\tau} - d_{BR}}{2}$. The estimated Cartesian coordinates of the target are then given by $(\hat{x},\, \hat{y},\, \hat{z})$, where
$\hat{x} = \hat{D}_t \cos\hat{\theta}\cos\hat{\psi}$,
$\hat{y} = \hat{D}_t \cos\hat{\theta}\sin\hat{\psi}$, 
and 
$\hat{z} = \hat{D}_t \sin\hat{\theta}$.
 To assess the accuracy of the proposed localization method, we adopt the root mean square error (RMSE) as the evaluation metric, which is expressed as
\vspace{0.2em}
\begin{equation}
\mathrm{RMSE}
= \sqrt{
(\,x - \hat{x}\,)^{2}
+ (\,y - \hat{y}\,)^{2}
+ (\,z - \hat{z}\,)^{2}
}.
\end{equation}

\begin{remark}
    The proposed localization approach can be extended to multi-target scenarios by extracting multiple angular peaks from the MUSIC spectrum and multiple  beat frequency components from the FMCW signal.
\end{remark}

\vspace{-1em}
\subsection{SE and EE Performance Metrics}
By using the received signal in~\eqref{eq:ykc}, the SE of CU $k_c$ is adopted as the performance metric and is given by
\begin{equation}\label{R_kc}
  \mathrm{SE}_{k_c} = \log_2 \left( 1 + \mathrm{SINR}_{k_c} \right),
\end{equation}
where $\mathrm{SINR}_{k_c}$ denotes the received SINR at CU $k_c$, given in ~\eqref{SINR_c}, on top of next page.

\begin{figure*}[t]  
    \begin{align}\label{SINR_c}
    &\mathrm{SINR}_{k_c}=
\frac{\pckc|\qh_{k_c}^T\Thetat\qG\qw^{\mathtt{c}}_{k_c}|^2}{\sum_{i \in\Kc\setminus k_c}p^{\mathtt{c}}_i|\qh_{k_c}^T\Thetat \qG \qw^{\mathtt{c}}_{i}|^2 +\sum_{k \in\Kp}p^{\mathtt{p}}_k|\qh_{k_c}^T\Thetat\qG\wpk|^2 +p_s|\qh_{k_c}^T\Thetat\qG\qw_s|^2+ \Vert\qh_{k_c}^T\mathbf{A} \circ \mathbf{\Theta}_{t}\Vert^2\sigma_v^2 + \sigma^2}
    \end{align}
        \hrulefill  
        \vspace{-0.7em}
\end{figure*}
The EE is a key performance metric that  captures the trade-off between the achieved communication throughput and the total power consumption of the system, which is defined as
\vspace{0.2em}
\begin{equation}
    \eta_{\mathrm{E}} = \frac{B\sum\nolimits_{k_c\in\Kc} \mathrm{SE}_{k_c} }{P_t + N  (P_c + P_b) +  N_\text{act}P_a},
\end{equation}
where $B$ denotes the system bandwidth; $\mathrm{SE}_{k_c}$ is the SE of CU $k_c$ given by \eqref{R_kc}; $P_t$ represents the transmit power at the BS; $P_c$ and $P_b$ are the static circuit and bias powers per STAR-RIS element; $P_a$ is the additional power consumed by active elements. Moreover, we define  $P_r$ as the dynamic power for amplifying the incident signal and the noise, given by
\vspace{0.2em}
\begin{align}
        P_r=&\sum\nolimits_{k \in \mathcal{K}}p_k(\Vert\boldsymbol{\Upsilon}_t\qG\qw_k\Vert^2+\Vert\boldsymbol{\Upsilon}_r\qG\qw_k\Vert^2)\notag\\&+(\Vert\boldsymbol{\Upsilon}_r\Vert^2+\Vert\boldsymbol{\Upsilon}_t\Vert^2)\sigma_v^2,
\end{align}
where $\mathcal{K}=\Kc\cup\Kp\cup{s}$ contains the indices of all CUs and EUs as well as sensing target.
It is worth noting that the terms $P_r$ and $P_a$ represent different components of the power consumption of active STAR-RIS elements. Specifically, $P_a$ denotes the fixed circuit power required to activate the active elements,  while  $P_r$  depends on the strength of the incoming signals. 
\vspace{-0.7em}
 \subsection{Harvested Energy}
For each EU $k_p$, the received signal power consists of the communication signals, the dedicated power transfer signals, the sensing waveform, as well as the additional noise from active STAR-RIS elements.  Accordingly, the harvested input energy can be expressed as
\vspace{0.3em}
\begin{align}
Q_{k_p} &\!= \!\! \sum\nolimits_{i\in\Kc} \pci \big| \qh_{k_p}^T \mathbf{\Theta}_t \qG \qw^{\mathtt{c}}_i \big|^2
\!+ \!\sum\nolimits_{k\in\Kp} p^{\mathtt{p}}_k\big| \qh_{k_p}^T\Thetat\qG\wpk \big|^2\nonumber\\
&+ p_s\big| \qh_{k_p}^T\Thetat\qG\qw_s \big|^2 + \big\Vert \qh_{k_p}^T (\mathbf{A} \circ \mathbf{\Theta}_t) \big\Vert^2 \sigma_v^2 .\label{Q_kp}
\end{align}
To account for the nonlinear characteristics of practical radio-frequency (RF) energy harvesting circuits, we adopt the widely used sigmoidal nonlinear EH model. The harvested  power at EU $k_p$ is\cite{Mumtaz:EH-CF-MIMO:TGCN:2025}
\vspace{0.2em}
\begin{equation}
P^{NL}_{k_p}(Q_{k_p}) = \frac{\Psi(Q_{k_p}) - \phi\,\Omega}{1 - \Omega},
\end{equation}
where $\phi$ denotes the maximum harvested energy, $\xi$ and $\chi$ characterize the sharpness and threshold of the EH circuit response, and $\Omega = \frac{1}{1 + \exp(\xi \chi)}$ is a normalization factor adjusting EH behavior under low input power. The nonlinear energy mapping is given by
\vspace{0.2em}
\begin{equation}\label{Psi_Qk}
    \Psi(Q_{k_p}) = \phi\left(1 + \exp\left(-\xi \left(Q_{k_p} - \chi\right)\right)\right)^{-1}.
\end{equation}

\section{Energy Efficiency Maximization}\label{sec: EE optimization}
In this section, we formulate and solve the EE maximization problem for the proposed hybrid STAR-RIS-assisted system. The channel estimates are used for precoder design and significantly affect the optimization performance, since the sum SE, harvested energy, and sensing SINR depend on the resulting effective channels.  The optimization objective is to maximize the EE, subject to communication QoS, sensing reliability, harvested energy, and hardware-power constraints. 
The optimization problem can be formulated as
\vspace{0.3em}
\begin{subequations}\label{opt_eq_O}
\begin{align}
&\max_{\mathbf{\Theta}_t,\mathbf{\Theta}_r,\qp,\qA}\;\eta_{\mathrm{E}}\triangleq\frac{\mathrm{SE}_{\mathrm{sum}}(\mathbf{\Theta}_t,\mathbf{A},\qp)}{P_{\mathrm{tot}}(\mathbf{\Theta}_t,\mathbf{A},\qp)} \\
\mathrm{s.t.}\quad &\qp \succeq \mathbf{0} , \ \ \sum\nolimits_{k \in \mathcal{K}} p_k\leq P_t^\mathrm{max},   \label{BS_Pt} \\
&\mathrm{SE}_{k}(\mathbf{\Theta}_t,\mathbf{A},\qp) \ge \mathrm{SE}_{k}^\mathrm{min},\quad \forall k\in\mathcal{K}_c\label{Rkc_QOS},\\
 &P^{\mathrm{NL}}_{k}(\mathbf{\Theta}_t,\mathbf{A},\qp)\ge \Gamma_E,\quad \forall k\in\mathcal{K}_p,\label{constraint-P^NL}\\ 
&\sum\nolimits_{k \in \mathcal{K}}p_k\Vert\boldsymbol{\Upsilon}_t\qG\qw_k\Vert^2+\sum\nolimits_{k \in \mathcal{K}}p_k\Vert\boldsymbol{\Upsilon}_r\qG\qw_k\Vert^2\notag\\&+\Vert\boldsymbol{\Upsilon}_r\Vert^2\sigma_v^2+\Vert\boldsymbol{\Upsilon}_t\Vert^2\sigma_v^2\leq P_r^{\mathrm{max}},\label{Qos_Pr}\\
&\SINR_s(\mathbf{\Theta}_r,\mathbf{A},\qp)\geq \gamma_s^{\mathrm{min}}, \label{QOS_SSINR}\\
&\theta_n \in(0,2\pi],\label{RIS_law1}\\
&1<\beta_{t,n},\beta_{r,n}\leq\rho_\mathrm{max}, 
\forall n\in\mathcal{A},\\
&\beta_{t,n}^2+\beta_{r,n}^2\leq1, n\notin\mathcal{A},\\
&0<\beta_{r,n}<1,n\notin\mathcal{A},\\
&0<\beta_{t,n}<1,n\notin\mathcal{A},\label{RIS_law2}\\
&\bm{\alpha}_n\in\{0,1\},\label{alpha_selection}
\end{align}
\end{subequations}
where $\qp \! = \![\,p^{\mathtt{c}}_1,\ldots,p^{\mathtt{c}}_{K_c},\,p^{\mathtt{p}}_1,\ldots,p^{\mathtt{p}}_{K_p},\,p_s\,]^T\in \mathbb{R}^{(\!K_c \!+\! K_p \!+\! 1)\times 1}$ stores the power allocation coefficients; $\mathrm{SE}_{\mathrm{sum}}=\sum\nolimits_{k\in\mathcal{K}_c}\mathrm{SE}_{k}$. The constraint \eqref{BS_Pt} originates from the total power budget at the BS and ensures that the sum of all transmit powers does not exceed this budget, $ P_t^\mathrm{max}$. 
The constraints in \eqref{Rkc_QOS} and \eqref{constraint-P^NL} are imposed to guarantee the QoS requirements of the CUs and EUs, respectively. Specifically, \eqref{Rkc_QOS} ensures a minimum SE of $\mathrm{SE}_{k}^{\min}$ for each CU, while \eqref{constraint-P^NL} guarantees a minimum harvested energy level of $\Gamma_E$ for each EU. It is worth noting that the WPT constraint in \eqref{constraint-P^NL} is not a simple linear received-power constraint. The harvested energy depends jointly on the communication signals, sensing waveform, dedicated energy signals, active-element noise, and the nonlinear EH circuit response. Consequently, the WPT requirement is tightly coupled with the power-allocation strategy, STAR-RIS transmission/reflection coefficients, and active/passive mode assignment, thereby influencing both the feasible solution space and the resulting system design. 
Moreover, constraint~\eqref{Qos_Pr} limits the amplification power of the active RIS elements based on their hardware capabilities, where $P_r^{\mathrm{max}}$ denotes the maximum allowable transmit power. Constraint (39f) guarantees sensing reliability by ensuring that the received sensing SINR exceeds the minimum required threshold $\gamma_s^{\mathrm{min}}$. Although localization performance can be characterized by the CRB, directly incorporating a CRB constraint into the EE maximization problem would lead to a highly non-convex formulation due to the dependence of the FIM on multiple coupled design variables. Therefore, consistent with several existing ISAC studies, such as \cite{demirhan2025cell},\cite{sankar2024beamforming} and \cite{ye2025scnr}, that employ tractable sensing-quality metrics during optimization, the sensing SINR is adopted in our proposed framework, while the resulting localization performance is explicitly evaluated through the CRB and RMSE analysis.  Finally, \eqref{RIS_law1}-\eqref{alpha_selection} define the amplitude constraints for active and passive elements based on the power conservation law.

Problem \eqref{opt_eq_O} is a highly nonconvex fractional program with coupled variables, making standard convex tools inapplicable. To ensure tractability, we decompose it into three subproblems: element-mode selection, power allocation under a given STAR-RIS configuration, and phase optimization given the power allocation. Using fractional programming and SCA, efficient iterative algorithms are developed to iteratively obtain high-quality solutions.

\vspace{-1em}
\subsection{Element Mode Selection}
Before  optimizing the RIS amplitude and phase coefficients and the BS transmit power, the operating mode of each RIS element must first be determined. To obtain the optimal element mode selection, exhaustive search provides the optimal approach via  all possible activation sets  $P_{\mathcal{S}}\!=\!\binom{N}{N_{\text{act}}}$. This approach  provides  an upper bound on the achievable performance of active-element selection. However, its computational complexity grows exponentially with the number of RIS elements, which makes it applicable only to scenarios with a relatively small surface size. To overcome this, we propose a low-complexity greedy strategy that incrementally selects active elements by their performance contribution. Though suboptimal, it offers acceptable results with complexity reduced from $\mathcal{O}(2^N)$ to $\mathcal{O}(N N_\text{act})$.

Let $\mathcal{S}$ and $\mathcal{R}$ denote the sets containing the indices of the already selected
active RIS elements and the remaining candidate elements, respectively. As summarized in \textbf{Algorithm~\ref{alg:greedy_active}}, we initialize the algorithm with
$\mathcal{S} = \emptyset$, $\mathcal{R} = \{1,\ldots,N\}$, and a full zero activation vector $\boldsymbol{\alpha}$. At each outer iteration, for every candidate element,  we temporarily form an activation vector that turns on the elements in $\mathcal{S} \cup \{n^*\}$, compute the corresponding EE, and record the best value $\eta_{\text{best}}$.
This greedy procedure is repeated until the number of selected elements
satisfies $|\mathcal{S}| = N_{\text{act}}$, after which the final activation vector
$\boldsymbol{\alpha}$ is obtained by setting $\alpha_n = 1$ for $n \in \mathcal{S}$
and $\alpha_n = 0$ otherwise.

\begin{algorithm}[t]
\caption{Greedy algorithm for active RIS element selection}
\label{alg:greedy_active}
\begin{algorithmic}[1]
\State \textbf{Input:} $N$, $N_{\mathrm{act}}.$
\State \textbf{Initialize:} selected set $\mathcal{S} \gets \emptyset$, remaining set $\mathcal{R} \gets \{1,2,\ldots,N\}$.
\For{$k = 1$ to $N_{\mathrm{act}}$}
    \State $\eta_{\mathrm{best}} \gets -\infty$, $n^{\star} \gets 0$;
    \For{each $n \in \mathcal{R}$}
        \State Construct $\boldsymbol{\alpha} \in \{0,1\}^N$ with
        $\alpha_i = 1$ if $i \in \mathcal{S} \cup \{n\}$ and $\alpha_i = 0$ otherwise;
        \State Given $\boldsymbol{\alpha}$, compute the EE
        $\eta_{\mathrm{E}}(\boldsymbol{\alpha}) = \dfrac{R^{\text{sum}}(\boldsymbol{\alpha})}{P_{\text{tot}}(\boldsymbol{\alpha})}$.
        \If{$\eta_{\mathrm{E}}(\boldsymbol{\alpha}) > \eta_{\mathrm{best}}$}
            \State $\eta_{\mathrm{best}} \gets \eta_{\mathrm{E}}(\boldsymbol{\alpha})$,  $n^{\star} \gets n$;
        \EndIf
    \EndFor
    \State $\mathcal{S} \gets \mathcal{S} \cup \{n^{\star}\}$.
    \State $\mathcal{R} \gets \mathcal{R} \setminus \{n^{\star}\}$.
\EndFor
\State Construct the final selection vector $\boldsymbol{\alpha}^{\star} \in \{0,1\}^N$ with
$\alpha_i^{\star} = 1$ if $i \in \mathcal{S}$ and $\alpha_i^{\star} = 0$ otherwise.
\State \textbf{Output:} greedy active element selection vector $\boldsymbol{\alpha}^{\star}$.
\end{algorithmic}
\end{algorithm}
\setlength{\textfloatsep}{0.3cm}

Although the greedy algorithm significantly reduces the computational complexity compared with the exhaustive search, its performance is inherently sensitive to the initialization. Specifically, a single greedy trajectory may become trapped in a local optimum because the first few selected elements strongly influence all subsequent decisions. To address this issue, we propose a heuristic-enhanced multi-seed greedy \textbf{(HE-MSG)} algorithm. To begin with, several promising initial active-element seeds are heuristically generated by channel-gain selection \textbf{(CGS)} of each RIS element as follows. The contribution of the $n$-th RIS element to the cascaded BS–RIS–user link can be quantified by its individual channel gain as
\begin{equation}
    g_n = \left\| [\hat{\qG}^T]_{:,n} [\hat{\qh}^T]_{n,:} \right\|^2, \quad n = 1,2,\ldots,N.
\end{equation}
We  sort the  channel gains $g_n$ to obtain the seed set $\mathcal{S}_{\mathrm{seed}}$ and each seed initializes a greedy path search. The detailed procedure is summarized in  
\textbf{Algorithm~\ref{alg:hemsg}}.  Compared with random selection and CGS, the proposed HE-MSG strategy jointly exploits channel information and the EE objective during active-element selection, leading to improved and more robust performance. Although the use of multiple search seeds introduces a moderate increase in computational complexity relative to single-seed greedy selection, the overall complexity remains significantly lower than that of exhaustive search while achieving a more favorable performance-complexity tradeoff.

\vspace{-1em}
\subsection{Power Allocation  Subproblem}
To begin with, we first consider the case where the mode scheduling variables $\bm{\alpha}$ and the STAR-RIS transmission coefficients  are fixed.  
Therefore, the first subproblem focuses on optimizing the transmit power allocation vector
$\qp \triangleq \big[ \qp^{\mathtt{c}},\ \qp^{\mathtt{p}},\ p_s \big],$ where $\qp^{\mathtt{c}} \in \mathbb{R}^{K_c \times 1}$ and
$\qp^{\mathtt{p}} \in \mathbb{R}^{K_p \times 1}$ denote the transmit power allocation vectors
for CUs and EUs, respectively, and $p_s \in \mathbb{R}$ denotes the sensing signal power. The resulting optimization problem is formulated as follows:
\vspace{0.2em}
\begin{subequations}\label{P1}
\begin{alignat}{2}
&\ \max_{\qp}      
&\qquad& \eta_{\mathrm{E}} = \frac{\mathrm{SE}_{\mathrm{sum}}(\qp)}{P_{\mathrm{tot}}(\qp)} \label{P1.1a}\\
&\hspace{1em}\text{s.t.}  &      & \qp \succeq \mathbf{0} , \ \ \sum\nolimits_{k \in \mathcal{K}} p_k\leq P_t^\mathrm{max},  \label{P1.1b}\\
&      &      & \SINR_{k}(\qp)\geq \gamma_{k}^{\mathrm{min}}, \forall k \in \mathcal{K}_c, \label{P1constriant_min_sinrkc}\\     
&      &      &\SINR_s(\qp)\geq \gamma_s^{\mathrm{min}}, \label{P1_constriant_min_sinrs}\\
&&& ~\eqref{constraint-P^NL}-\eqref{Qos_Pr}, \label{P1_lastC}
\end{alignat}
\end{subequations}
where $\gamma_{k}^{\mathrm{min}}\triangleq 2^{\mathrm{SE}_{k}^\mathrm{min}}-1$  denotes the minimum SINR requirements
for CUs.
\begin{algorithm}[t]
\caption{Heuristic-enhanced multi-seed greedy (HE-MSG) algorithm for active STAR-RIS element selection}
\label{alg:hemsg}
\begin{algorithmic}[1]
\State \textbf{Input:}  $N$,  $N_{\mathrm{act}}$ and $\{g_n\}_{n=1}^N$.
\State \textbf{Heuristic Seed Generation:} sort $\{g_n\}$ in descending order and obtain the index set
$\mathcal{S}_{\mathrm{seed}}=\{s_1,\ldots,s_{N_{\mathrm{act}}}\}$.
\State \textbf{Initialize:} $\eta_{\mathrm{best}}^{\mathrm{overall}} \gets -\infty$, 
$\boldsymbol{\alpha}^{\mathrm{best}} \gets \mathbf{0}_N$.
\For{$m=1$ to $N_{\mathrm{act}}$} 
    \State $\mathcal{S} \gets \{s_m\}$, \quad $\mathcal{R} \gets \{1,\ldots,N\}\setminus\mathcal{S}$;
    \While{$|\mathcal{S}| < N_{\mathrm{act}}$}
        \State $\eta_{\mathrm{best}} \gets -\infty$, \quad $n^{\star} \gets 0$.
        \For{each $n \in \mathcal{R}$}
\State Construct $\boldsymbol{\alpha}$ where $\alpha_i = 1$ if $i \in \mathcal{S}\cup\{n\}$ and $\alpha_i = 0$ otherwise.
            \State Compute $\eta_{\mathrm{E}}(\boldsymbol{\alpha}) = \dfrac{\mathrm{SE}_{\mathrm{sum}}(\boldsymbol{\alpha})}{P_{\mathrm{tot}}(\boldsymbol{\alpha})}$.
            \If{$\eta_{\mathrm{E}}(\boldsymbol{\alpha}) > \eta_{\mathrm{best}}$}
                \State $\eta_{\mathrm{best}} \gets \eta_{\mathrm{E}}(\boldsymbol{\alpha})$, \quad $n^{\star} \gets n$.
            \EndIf
        \EndFor
        \State $\mathcal{S} \gets \mathcal{S} \cup \{n^{\star}\}$,
               \quad $\mathcal{R} \gets \mathcal{R} \setminus \{n^{\star}\}$.
    \EndWhile
    \State Construct $\boldsymbol{\alpha}^{(m)}$ with 
           $\alpha^{(m)}_i = 1$ if $i \in \mathcal{S}$ and $0$ otherwise.
    \State Evaluate $\eta_{\mathrm{E}}^{(m)} = \eta_{\mathrm{E}}(\boldsymbol{\alpha}^{(m)})$.
    \If{$\eta_{\mathrm{E}}^{(m)} > \eta_{\mathrm{best}}^{\mathrm{overall}}$}
        \State $\eta_{\mathrm{best}}^{\mathrm{overall}} \gets \eta_{\mathrm{E}}^{(m)}$,
               \quad $\boldsymbol{\alpha}^{\mathrm{best}} \gets \boldsymbol{\alpha}^{(m)}$.
    \EndIf
\EndFor
\State \textbf{Output:} selected vector $\boldsymbol{\alpha}^{\mathrm{best}}$.
\end{algorithmic}
\end{algorithm}
\setlength{\textfloatsep}{0.05cm}
The above problem is a classical linear fractional programming problem, which can be solved  via the Dinkelbach approach\cite{HuangAO_TWC_2024}. To this end, we can reformulate \eqref{P1} as,
\vspace{-0.6em}
\begin{subequations}\label{P1.1}
\begin{alignat}{2}
&\max_{\qp}      
&\qquad& \mathrm{SE}_{\mathrm{sum}} - \varpi_1^{(n)}\!\left(\sum\nolimits_{k \in \mathcal{K}} c_k p_k + d\right) \label{eq:P1.1:obj}\\
&\hspace{0.5em}\text{s.t.} 
&      & \eqref{P1.1b} - \eqref{P1_lastC} \label{eq:P1.1:ct1},
\end{alignat}
\end{subequations}
where $c_k\triangleq\Vert\boldsymbol{\Upsilon}_t\qG\qw_k\Vert^2+\Vert\boldsymbol{\Upsilon}_r\qG\qw_k\Vert^2$ and $d\triangleq\Vert\boldsymbol{\Upsilon}_t\Vert^2\sigma_v^2+\Vert\boldsymbol{\Upsilon}_r\Vert^2\sigma_v^2+P_t + N  (P_c + P_b)+N_\text{act}P_a$ are constant values, while $\varpi_1\ge0$ is the Dinkelbach variable, which is updated by
\vspace{-1em}
\begin{equation}~\label{eq:wnone}
  \varpi_1^{(n+1)}=\frac{ \mathrm{SE}_{\mathrm{sum}}^{(n)}}{\sum_{k \in \mathcal{K}} c_k p_k^{(n)} +d},
\end{equation}
until $\big|\mathrm{SE}_{\mathrm{sum}}^{(n)}- \varpi_1^{(n)} \big(\sum_{k \in \mathcal{K}} c_k p_k^{(n)}+d\big)\big|\leq\varepsilon$, where $\varepsilon > 0$ is a prescribed accuracy threshold for the Dinkelbach iteration, while terms with superscript $(n)$ denote the given points at iteration $(n+1)$. Accordingly, we obtain the final optimized power allocation coefficients as follows
\vspace{-0.3em}
\begin{subequations}\label{P1.2}
\begin{alignat}{2}
&\max_{\qp, v_{k_c}}      
&\qquad& \sum\nolimits_{{k_c}\in\mathcal{K}_c} v_{k_c} - \varpi_1^{(n)}\!\left(\sum\nolimits_{k \in \mathcal{K}} c_k p_k + d\right) \label{eq:P1.2:obj}\\
&\hspace{0.5em}\text{s.t.} 
&      & \mathrm{SE}_{k_c} \leq v_{k_c}, ~\forall k_c \in \mathcal{K}_c, \label{P1.1-2b}\\
&      &      & \eqref{P1.1b} - \eqref{P1_lastC} \label{eq:P1.2:ct2},
\end{alignat}
\end{subequations}
where  $v_{k_c}$ is the auxiliary variable for $\mathrm{SE}_{k_c}$. It can be observed that the optimization problem in \eqref{P1.2} is  non-convex due to constraint \eqref{P1.1-2b}. To tackle this issue,  by using the $\SINR_{k_c}$ expression in~\eqref{SINR_c}, we rewrite \eqref{P1.1-2b}  as 
\vspace{-0.4em}
\begin{align}
    f(\pckc)&\le t_{k_c}\cdot g(\qp), \quad \forall k_c\in\mathcal{K}_c,\label{productterm}
\end{align}
where  $f(\pckc)\triangleq \pck|\qh_{k_c}^T\Thetat\qG\qw^{\mathtt{c}}_{k_c}|^2$, $g(\qp)\triangleq \sum_{i \in\Kc\setminus k_c}p^{\mathtt{c}}_i|\qh_{k_c}^T\Thetat \qG \qw^{\mathtt{c}}_{i}|^2 +\sum_{k \in\Kp}p^{\mathtt{p}}_k|\qh_{k_c}^T\Thetat\qG\wpk|^2 +p_s|\qh_{k_c}^T\Thetat\qG\qw_s|^2+ \Vert\qh_{k_c}^T\mathbf{A} \circ \mathbf{\Theta}_{t}\Vert^2\sigma_v^2 + \sigma^2$, and $t_{k_c} \triangleq 2^{v_{k_c}}-1$. Since the right-hand side (RHS) of \eqref{productterm} contains product terms involving the optimization variable, the constraint remains non-convex. To address this challenge, we employ SCA and use the following upper bound 
\vspace{-0.2em}
\begin{align}
      & xy \leq 0.25 [(x+y)^2-2(x^{(n)}-y^{(n)})(x-y) 
    \nonumber\\
&\hspace{2em}+ (x^{(n)}-y^{(n)})^2],\label{xy_taylor}
\end{align}
to replace the RHS of \eqref{productterm} with its concave lower bound. 

Before we proceed, let us define 
\begin{align*}
    g_{i,k_c}&\triangleq |\qh_{k_c}^T \Thetat \qG \qw^{\mathtt{c}}_{i}|^2, \quad
    d_{k_c}\triangleq|\qh_{k_c}^T\Thetat\qG\qw^{\mathtt{c}}_{k_c}|^2,\\
    g^{\mathtt{p}}_{k,k_c} &\triangleq |\qh_{k_c}^T \Thetat \qG \wpk|^2,\quad
    g^{\mathtt{s}}_{k_c} \triangleq |\qh_{k_c}^T \Thetat \qG \qw_s|^2.
\end{align*}
Then, we can rewrite~\eqref{productterm} as
\vspace{0.4em}
\begin{align}
&4\pckc d_{k_c}\leq\!\big(t_{k_c}^{(n)}\!+\!\!p^{\mathtt{c}{(n)}}_ig_{i,k_c}\big)^2\!+\!2\bigg(t_{k_c}^{(n)}\!+\!\!\sum_{i\in\Kc\setminus k_c}p^{\mathtt{c}{(n)}}_ig_{i,k_c}\bigg) \notag\\
&\times\Big(t_{k_c}\!\!-\!t_{k_c}^{(n)}\!+\!\!\!\!\!\sum_{i\in\Kc\setminus k_c}\!\!\!\big(p^{\mathtt{c}}_i\!-p^{\mathtt{c}{(n)}}_i\big)g_{i,k_c}\Big) \!-\!\Big(t_{k_c}\!-\!\!\!\!\sum_{i\in\Kc\setminus k_c}\!\!\!p^{\mathtt{c}}_i g_{i,k_c}\Big)^2\notag\\
&+\Big(t_{k_c}^{(n)}\!\!+\!\!
\sum\nolimits_{k\in\Kp}p^{\mathtt{P}{(n)}}_k 
g^{\mathtt{p}}_{k,k_c}\Big)^2 \!+\!2\Big(t_{k_c}^{(n)}\!+\!\!\sum\nolimits_{k\in\Kp} p^{\mathtt{P}{(n)}}_k g^{\mathtt{p}}_{k,k_c}\Big) \notag\\
&\times\!\Big(t_{k_c}\!\!-\!t_{k_c}^{(n)}\!+\!\!\!\sum_{k\in\Kp}\big(p^{\mathtt{p}}_k\!-\!p^{\mathtt{P}{(n)}}_k\big) g^{\mathtt{p}}_{k,k_c}\Big) 
\!-\Big(t_{k_c}-\!\!\!\sum_{k\in\Kp}p^{\mathtt{p}}_k g^{\mathtt{p}}_{k,k_c}\Big)^2\notag\\
&+\!\big(t_{k_c}^{(n)}\!\!+\!p_s^{(n)}g^{\mathtt{s}}_{k_c}\big)^{\!2} \!\!\!+\!2\big(t_{k_c}^{(n)}\!\!+\!p_s^{(n)}g^{\mathtt{s}}_{k_c}\big)\big(t_{k_c}\!\!-\!t_{k_c}^{(n)}\!\!+\!(p_s \!\!-\! p_s^{(n)})g^{\mathtt{s}}_{k_c}\big) \notag\\
&-\big(t_{k_c}-p_s g^{\mathtt{s}}_{k_c}\big)^2+
4t_{k_c}\Big(\Vert\qh_{k_c}^T\mathbf{A} \circ \mathbf{\Theta}_{t}\Vert^2\sigma_v^2 + \sigma^2\Big),
\label{p1_xy_final}
\end{align}
\vspace{0.2em}
which is convex with respect to optimization variables. 
 
Before proceeding to solve this optimization problem, we noticed the constraint \eqref{constraint-P^NL} is equivalent to
\begin{equation}\label{psi_~}
    \Psi(Q_{k_p}) \ge \tilde{\Gamma}_E,
\end{equation}
where $\tilde{\Gamma}_E\triangleq(1-\Omega)\Gamma_E +\phi\Omega$. However, this non-convex constraint involves a sigmoidal function, which makes direct optimization highly challenging. Therefore, by applying the inverse  function transformation of the \eqref{Psi_Qk}, the constraint can be reformulated into a more tractable form:
\begin{equation}\label{Q_psi}
    Q_{k_p} = \chi - \frac{1}{\xi} \ln\left(\frac{\phi - \Psi}{\Psi}\right).
\end{equation}
By applying \eqref{psi_~} and \eqref{Q_psi}, constraint \eqref{constraint-P^NL} is reformulated as a lower bound constraint as
\begin{equation}\label{Qk_psi2}
   Q_{k_p} \ge  Q_{k_p}^{\mathrm{min}}\triangleq\chi - \frac{1}{\xi} \ln\bigg(\frac{\phi - \tilde{\Gamma}_E}{\tilde{\Gamma}_E}\bigg) .
\end{equation}

From the above discussion, we arrive at the following the approximate convex problem
\begin{subequations}\label{P1.1-4}
\begin{alignat}{2}
&\max_{\qp,\{t_{k_c}\}}      
&~\hspace{0.1em}& \sum\nolimits_{k_c}\!\!\!\log_2(1+t_{k_c}) \!-\! \varpi_1^{(n)}\!\Big(\sum\nolimits_{k \in \mathcal{K}}\!\! c_k p_k \!+\! d\Big) \label{eq:P1.1-4:obj}\\
&\hspace{0.2em}\text{s.t.} 
&      & \log_2(1+t_{k_c}) \ge \mathrm{SE}_{k_c}^\mathrm{min},\\
& & & \eqref{Qos_Pr},\eqref{P1.1b},  \eqref{p1_xy_final},\eqref{P1constriant_min_sinrkc}, \eqref{P1_constriant_min_sinrs}, \eqref{Qk_psi2}. \label{eq:P1.1-4:ct1}
\end{alignat}
\end{subequations}
To this end, problem \eqref{P1.1-4} is a convex optimization problem, which can be efficiently tackled with standard convex solvers, such as CVX. The algorithm details are shown in \textbf{Algorithm~\ref{alg:P1}}.
\begin{algorithm}[t]
\caption{Proposed iterative algorithm for solving power allocation subproblem}
\label{alg:P1}
\begin{algorithmic}[1]
\State \textbf{Initialize:} feasible power allocation $\qp^{(0)}$, auxiliary variables 
$\{t_{k_c}^{(0)}\}$, Dinkelbach parameter $\varpi_1^{(0)}=0$, iteration index $n=0$, 
and error tolerance $\varepsilon$.
\Repeat
    
    \State Solve the convex problem \eqref{P1.1-4} to obtain $\qp^{(n+1)}$ and $\{t_{k_c}^{(n+1)}\}$.
    \State Update $\varpi_1^{(n+1)}$ as in~\eqref{eq:wnone}.
    \State $n =n+1$.
\Until{convergence  $|\mathrm{SE}_{\mathrm{sum}}^{(n)}- \varpi_1^{(n-1)}\sum_{k \in \mathcal{K}} c_k p_k^{(n)} +d)|\leq \varepsilon$ or $n>$ max iteration numbers.}
\State \textbf{Output:} optimal power allocation $\qp^\star$.
\end{algorithmic}
\end{algorithm}
\setlength{\textfloatsep}{0.25cm}

\vspace{-1.4em}
\subsection{STAR-RIS Coefficient Design}
In this subproblem, we focus on EE optimization by adjusting the STAR-RIS coefficients under the given power allocation. Mathematically, this can be formulated as follows:
\vspace{-0.6em}
\begin{subequations}\label{P2}
\begin{alignat}{2}
&\ \max_{\mathbf{\Theta}_t,\mathbf{\Theta}_r}      
&\qquad& \eta_{\mathrm{E}} = \frac{\mathrm{SE}_{\mathrm{sum}}(\mathbf{\Theta}_t)}{P_{\mathrm{tot}}} \\ 
&\hspace{1em}\text{s.t.} 
&      & \mathrm{SINR}_{k_c}(\mathbf{\Theta}_t) \ge \gamma_{k_c}^\mathrm{min}, \label{P1.2.SINRkc}
\\
&      &      & \SINR_s(\mathbf{\Theta}_r) \ge \gamma_s^\mathrm{min},\label{p1.2.SINRs}
\\
&      &      &
 Q_{k_p}(\mathbf{\Theta}_t)  \ge  Q_{k_p}^{\mathrm{min}},\label{p1.2.Qptetat}
 \\
&&&~\eqref{Qos_Pr},\eqref{RIS_law1}-\eqref{RIS_law2}.
\end{alignat}
\end{subequations}
This optimization problem is non-convex since both the objective function and the constraints~\eqref{P1.2.SINRkc}, \eqref{p1.2.SINRs}, \eqref{p1.2.Qptetat}, and \eqref{Qos_Pr} are non-convex with respect to the optimization variables. To deal with this, we define the new variables as $\qv_{q} = [v_{q,1}, v_{q,2}, \dots, v_{q,N}]^\mathrm{T} \in \mathbb{C}^{N \times 1}$, with $q\in\{t,r\}$ 
where $\operatorname{diag}(\qv_t) \triangleq \boldsymbol{\Theta}_t$ and $\operatorname{diag}(\qv_r) \triangleq \boldsymbol{\Theta}_r$. 

We next reformulate constraint \eqref{P1.2.SINRkc} as
\begin{align} \label{gammaI_c}
 \gamma_{k_c}^{\mathrm{min}}  I_{k_c}(\qv_t)  \le \pckc \left|  \qv_t^T\qh_{k_c} \qG \qw^{\mathtt{c}}_{k_c} \right|^2,
\end{align}
where $\qh_{k_c}\triangleq\operatorname{diag}(\qh_{k_c} )$ and
\begin{align}
I_{k_c}&(\qv_t) \triangleq \!  \! \! \! \! 
\sum_{i\in\mathcal{K}_c\setminus k_c} p^{\mathtt{c}}_i \left| \qv_t^T\qh_{k_c}\qG \qw^{\mathtt{c}}_i \right|^2
\! + \! \! \! \sum_{k \in\mathcal{K}_p} p^{\mathtt{p}}_k \left|\qv_t^T \qh_{k_c} \qG \wpk \right|^2 \notag \\
& + p_s \left|\qv_t^T \qh_{k_c} \qG \qw_s \right|^2
+ \left\Vert \qh^T_{k_c} \mathbf{A}\operatorname{diag}(\qv_t) \right\Vert^2 \sigma_v^2 + \sigma^2.
\label{I_vt_zt}
\end{align}

To address the non-convexity of constraint \eqref{P1.2.SINRkc}, we employ the SCA method to derive the following concave lower bound:
\begin{align} 
&\gamma_{k_c}^{\mathrm{min}}I_{k_c}(\qv_t) 
\leq \pckc \big|  \vtmT\qh_{k_c} \qG \qw^{\mathtt{c}}_{k_c} \big|^2 +2\pckc\Re\big\{(\vtmT\qh_{k_c}\notag\\
&\hspace{5em} \times\qG \qw^{\mathtt{c}}_{k_c}(\qw^{\mathtt{c}}_{k_c})^H\qG^H\qh_{k_c}^H)\big(\qv_t-\qv_t^{(m)}\big)\big\},
\label{Gamma_kc_final}
\end{align}
\vspace{0.2em}
where we used a superscript $(m)$ to denote
the value of the involving variable produced after $(m-1)$ iterations
($m\geq1$).

The non-convex constraint \eqref{p1.2.SINRs} can be rewritten as
\begin{equation}\label{gammaI_s}
    \gamma_s^{\mathrm{min}}  I_s(\qv_r) \le p_s \Vert\beta_s\qb_s\qa^T_\text{RT} \operatorname{diag}(\qv_r)  \qG \qw_s \Vert^2,
\end{equation}
where
\begin{align}
I_s(\qv_r) \triangleq &
\sum\nolimits_{i \in \mathcal{K}_c} p^{\mathtt{c}}_i \Vert \beta_s\qb_s\qv_r^T\operatorname{diag}(\qa_\text{RT})  \qG \qw^{\mathtt{c}}_i \Vert^2\notag\\
&+ \sum\nolimits_{k \in \mathcal{K}_p}p^{\mathtt{p}}_k \Vert \beta_s\qb_s\qv_r^T\operatorname{diag}(\qa_\text{RT})  \qG \wpk \Vert^2 \notag\\
&+ \Vert \beta_s\qb_s\qa^T_\text{RT} \qA \qv_r \Vert^2 \sigma_v^2 + \sigma^2.
\end{align}
\vspace{0.1em}

Nevertheless, constraint \eqref{gammaI_s} remains non-convex. Therefore, we leverage the SCA technique to derive the following concave lower bound:
\begin{align}
 \gamma_s^{\mathrm{min}}I_s(\qv_r) &\le    p_s \big| \beta_s \qb_s \vrmT    \operatorname{diag}(\qa_{\text{RT}}) \qG \qw_s \big|^2 \notag\\
 &\quad+2p_s\Re\big\{\big(\beta_s^2 \qb_s^H\qb_s \vrmT    \operatorname{diag}(\qa_{\text{RT}}) \qG \qw_s\notag \\
&\quad\times\qw_s^H\qG^H\operatorname{diag}(\qa_{\text{RT}}^H) \big)(\qv_r-\qv_r^{(m)})\big\}.
 \label{gamma_s_final}
\end{align}
\vspace{0.1em}

Next, we focus on constraint \eqref{p1.2.Qptetat}, whose non-convexity arises from the non-convex nature of ${Q}_{k_p}$ with respect to the optimization variables. To address this, we first express ${Q}_{k_p}$ in \eqref{Q_kp} in terms of $\qv_t$ as follows:
\begin{align}
 {Q}_{k_p} &=   \sum\nolimits_{i\in\Kc} \pci \left| \qv_t^T\operatorname{diag(\qh_{k_p})} \qG \qw^{\mathtt{c}}_i \right|^2
\nonumber\\
&\hspace{-2em}
+ \sum\nolimits_{k\in\Kp} p^{\mathtt{p}}_k\left| \qv_t^T\operatorname{diag(\qh_{k_p})}\qG\wpk \right|^2\nonumber\\
&\hspace{-2em}
+ p_s\left| \qv_t^T\operatorname{diag(\qh_{k_p})}\qG\qw_s \right|^2 + \Vert \qh_{k_p}^T \mathbf{A}  \operatorname{diag(\qv_t)} \Vert^2 \sigma_v^2. 
\end{align}
\vspace{0.2em}
This expression remains non-convex. Using the SCA, we can derive a convex lower bound as shown in \eqref{Qkp_taylor} at the top of next page.
\begin{figure*}
\begin{align}\label{Qkp_taylor}
 {Q}_{k_p}&\!\geq \!\tilde{Q}_{k_p}\!\triangleq\sum\nolimits_{i\in\Kc} \!\!\pci \Big| \vtmT\operatorname{diag(\qh_{k_p})} \qG \qw^{\mathtt{c}}_i \Big|^2 \!+\!\sum\nolimits_{i\in\Kc}\!\! 2\pci\Re\Big\{\vtmT\operatorname{diag(\qh_{k_p})} \qG \qw^{\mathtt{c}}_i(\qw^{\mathtt{c}}_i)^H\qG^H\operatorname{diag(\qh_{k_p})}^H\big(\qv_t\!-\!\qv_t^{(m)}\big)\Big\} \notag \\
&\quad+
\sum\nolimits_{k\in\Kp} p^{\mathtt{p}}_k
\Big| \vtmT\operatorname{diag}(\qh_{k_p})\qG\wpk \Big|^2
+ 
\sum\nolimits_{k\in\Kp} 2p^{\mathtt{p}}_k
\Re\Big\{
\vtmT\operatorname{diag}(\qh_{k_p})\qG\wpk
(\wpk)^H\qG^H\operatorname{diag}(\qh_{k_p})^H
\big(\qv_t - \qv_t^{(m)}\big)
\Big\}\notag \\
&\quad +p_s\Big| \vtmT \operatorname{diag}(\qh_{k_p}) \qG \qw_s \Big|^2 +2p_s  \Re \Big\{
\vtmT \operatorname{diag}(\qh_{k_p}) \qG \qw_s
\cdot \qw_s^H \qG^H \operatorname{diag}(\qh_{k_p})^H \big(\qv_t - \qv_t^{(m)}\big)
\Big\}\notag \\
&\quad+\sigma_v^2 \big\| \qh_{k_p}^T \mathbf{A} \operatorname{diag}(\qv_t^{(m)}) \big\|^2 
+ 2\sigma_v^2  \Re \Big\{
\left( \qh_{k_p}^T \mathbf{A} \operatorname{diag}(\qv_t^{(m)}) \right) 
\left( \operatorname{diag}\left( \mathbf{A}^H \qh_{k_p}^* \right) \right)
 \big(\qv_t - \qv_t^{(m)}\big)
\Big\}.
\end{align}
\hrulefill
\vspace{-1.5em}
\end{figure*} 
Thus,~\eqref{Qk_psi2} can be reformulated as the following convex constraint
\vspace{-0.5em}
\begin{equation}
     \tilde{Q}_{k_p} (\qv_t)\ge \chi - \frac{1}{\xi} \ln\left(\frac{\phi - \tilde{\Gamma}_E}{\tilde{\Gamma}_E}\right). \label{Q_min_p2}
\end{equation}

Moreover, the convex reformulation of constraint \eqref{Qos_Pr} is 
\vspace{-0.3em}
\begin{align}
    &\sum\nolimits_{k\in \mathcal{K}}p_k \left\|  \qA\operatorname{diag}(\qv_t)\qG \right\|^2
+ \sum\nolimits_{k \in \mathcal{K}}p_k \left\| \qA \operatorname{diag}(\qv_r)\qG \right\|^2\notag\\
&+ \| \qA\operatorname{diag}(\qv_r)  \|^2 \sigma_v^2
+ \| \qA\operatorname{diag}(\qv_t)  \|^2 \sigma_v^2
\leq P_r^{\mathrm{max}}. \label{Pr_final} 
\end{align}

Finally, to handle the non-convexity of the objective function, we introduce an auxiliary variable $\vartheta_j$ as follows
\vspace{-0.2em}
\begin{subequations}\label{5vars}
\begin{alignat}{2}
&\max_{\qv_t,\qv_r,\{\vartheta_j\}}      
&\quad&  \eta_{\mathrm{E}}=
\frac{\sum_{j\in\mathcal{K}_c}\vartheta_j}{P_{\text{tot}}}
            \\
&\hspace{1em}\text{s.t.} 
&      &\vartheta_j \ge \operatorname{log}_2(1+\SINR_{j}(\qv_t)), \forall j\in \mathcal{K}_c,\label{R_kc.P.2}\\
&      &      &  0<|[\qv_q]_n|\leq \rho_\mathrm{max},~q\in\{t,r\}, \forall n\in\mathcal{A},\label{vt_r01}\\
&      &      &  0<|[\qv_q]_n|\leq 1,~q\in\{t,r\},\forall 
\ n\notin\mathcal{A},\\
&      &      &  |[\qv_r]_n|^2 + |[\qv_t]_n|^2\leq 1, 
\ n\notin\mathcal{A},\\
&      &      & \eqref{Gamma_kc_final},\eqref{gamma_s_final},\eqref{Pr_final},\eqref{Q_min_p2}.\label{60final}
\end{alignat}
\end{subequations}
Problem~\eqref{5vars} remains an intractable problem due to the non-convex terms in~\eqref{R_kc.P.2}. To tackle this issue, we introduce the following auxiliary variables $x_j$ and $y_j$ ($\forall j\in\mathcal{K}_c$) as
\vspace{-0.3em}
\begin{align}
&\frac{1}{x_j}\le  \pcj \big|  \vtmT\qh_{j} \qG \wcj\big|^2\notag\\
&+2\pcj\Re\Big\{(\vtmT\qh_{j} \qG \wcj\wcjH\qG^H\qh_{j}^H)(\qv_t-\qv_t^{(m)})\Big\},\label{xj}\\
y_j& \!\ge\!\! 
\sum\nolimits_{i \in\mathcal{K}_c\setminus j} p^{\mathtt{c}}_i \left| \qv_t^T\qh_{j}\qG \qw^{\mathtt{c}}_i \right|^2
\!+\! \sum\nolimits_{k \in \mathcal{K}_p} p^{\mathtt{p}}_k \left|\qv_t^T \qh_{j} \qG \wpk \right|^2 \notag 
\\
& + p_s \left|\qv_t^T \qh_{j} \qG \qw_s \right|^2
+ \left\Vert \qh_{j}^T \mathbf{A}\operatorname{diag}(\qv_t) \right\Vert^2 \sigma_v^2 + \sigma^2.\label{yj}
\end{align}
 Following \cite{Mohammadi:FD-CF-MIMO:JSAC:2023}, for any given points $x_j^{(m)}$ and $  y_j^{(m)}$, the constraint \eqref{R_kc.P.2} can attain the lower bound as
 \vspace{-0.4em}
    \begin{align}
       \vartheta_j \ge& \operatorname{log}_2\Big(1+\frac{1}{x_j^{(m)}y_j^{(m)}}\Big)-\frac{x_j-x_j^{(m)}}{\ln(2)\big(x_j^{(m)}+x_j^{(t)^2}y_j^{(m)}\big)}\notag\\
        &-\frac{y_j-y_j^{(m)}}{\ln(2)\Big(y_j^{(m)}+y_j^{(t)^2}x_j^{(m)}\Big)} \triangleq \bar\vartheta_j, ~j\in\mathcal{K}_c.\label{R_kc_final}
    \end{align}

To this end, problem~\eqref{P2} is transformed
into 
\vspace{-0.2em}
\begin{subequations}\label{P2_final}
\begin{alignat}{2}
&\max_{\qv_t,\qv_r,\{\vartheta_j\}}      
&\qquad& \frac{\sum_{j\in\mathcal{K}_c} \vartheta_j}{P_{\text{tot}}} 
            \\
&\hspace{1.5em}\text{s.t.} 
 &      & \eqref{vt_r01}-\eqref{60final},\eqref{xj}-\eqref{R_kc_final}.
\end{alignat}
\end{subequations}

The optimization problem~\eqref{P2_final} is  a  convex optimization problem and can be solved using standard convex solvers, e.g., CVX\cite{Mumtaz:EH-CF-MIMO:TGCN:2025}. 
\begin{algorithm}[t]
\caption{STAR-RIS coefficient design }
\label{alg:P2}
\begin{algorithmic}[1]
\State \textbf{Initialize:} feasible $(\qv_t^{(0)},\qv_r^{(0)},\{R_j^{(0)}\})$, 
set $\varpi_2^{(0)}=0$, iteration index $m=0$, tolerance $\varepsilon$.

  \State For given $\qp$ and solve optimization problem \eqref{P2_final}.
  \State \textbf{Output:} Solve convex problem to obtain updated $\qv_t^{(m+1)},\qv_r^{(m+1)},x_{j}^{(m+1)},y_{j}^{(m+1)}$,$\eta_{\mathrm{E}}^{(m+1)}$.\\The physically realizable STAR-RIS coefficients $\boldsymbol{\Theta}_t^{(m+1)}$ and $\boldsymbol{\Theta}_r^{(m+1)}$ are constructed from $\qv_t^{(m+1)}$ and $\qv_r^{(m+1)}$ following the predefined mapping.
  
\end{algorithmic}
\end{algorithm}
\setlength{\textfloatsep}{0.05cm}

\vspace{-0.8em}
\subsection{Overall Algorithm and Complexity Analysis}

Finally, we employ an outer AO procedure, summarized in \textbf{Algorithm~\ref{alg:AO}}, to jointly optimize all design parameters. Specifically, the procedure first executes \textbf{Algorithm~\ref{alg:P1}} to update the power-allocation vector, and then applies \textbf{Algorithm~\ref{alg:P2}} using the updated variables to refine the phase-shift coefficients. This process is iteratively repeated until the improvement in the objective function between consecutive iterations falls below a predefined threshold $\varepsilon_\text{AO}$.

\begin{algorithm}[t]
\caption{Overall AO framework}
\label{alg:AO}
\begin{algorithmic}[1]
\State \textbf{Initialize:} feasible $(\Thetat^{(0)},\Thetar^{(0)},\qp^{(0)})$, set $t=0$, tolerances $\varepsilon_\text{AO}$, max iteration numbers $T_\text{AO}$.
\State Obtain the selection vector $\boldsymbol{\alpha}$ using Algorithm~3.
\Repeat
  \State \textbf{(Power Allocation)}: Fix $(\Thetat^{(t)},\Thetar^{(t)})$, solved $\Rightarrow \qp^{(t+1)}$.
  \State \textbf{ (Mode \& Coefficients)}: Fix $\qp^{(t+1)}$, solved $\Rightarrow (\Thetat^{(t+1)},\Thetar^{(t+1)})$.
  \State Update $\eta_{\mathrm{E}}^{(t+1)}$.
  \State $t \gets t+1$.
\Until{$|\eta_{\mathrm{E}}^{(t)}-\eta_{\mathrm{E}}^{(t-1)}|\le \varepsilon_\text{AO}$ or $t\ge T_\text{AO}$}
\State \textbf{Output:} $(\Thetat^\star,\Thetar^\star,\qp^\star,\eta_{\mathrm{E}}^\star)$.
\end{algorithmic}
\end{algorithm}
\setlength{\textfloatsep}{0.05cm}

\begin{table}[t]
\setlength{\abovecaptionskip}{0cm}
\captionsetup{justification=centering}
\centering
\caption{Computational complexity variables for optimization problems }
\centering
\small
\label{table:comlexity}
\begin{tabular}{|l|l|l|l|}
\hline
Problem & $A_v$        & $A_l$     & $A_q$       \\ \hline
\textbf{Algorithm 4}      & $2K_C+K_P+1$ & $K_C+K_P$ & $K_C+K_P+1$ \\ \hline
\textbf{Algorithm 5}      & $K_C$      & $2N+K_C$  & $K_C+K_P+1$ \\ \hline
\end{tabular}
\end{table}

\begin{table}[t]

\setlength{\abovecaptionskip}{0cm}
\caption{Simulation parameters}\label{parameters}
\centering
\scriptsize
\begin{tabular}{|p{0.9cm}|p{4.8cm}|p{1.1cm}|}
\hline
\textbf{Notation} & \centering\textbf{Parameter} & \textbf{Value} \\ \hline
$P_c + P_b$   & Static circuit and bias power per RIS element & 10 dBm \\ \hline
$P_a$         & Power consumed by each active element         & $0.1$ W  \\ \hline
$P_t$         & Transmit power of the BS                      & 40 dBm \\ \hline
$R_{\min}$    & Minimum SE requirement of each CU & 0.1 bps/Hz \\ \hline
$Q_{\min}$    & Minimum harvested power for each EU & $1\,\mu\text{W}$  \\ \hline
$\gamma_{s}^{\min}$ & Minimum sensing SINR requirement & 2 dB \\ \hline
$\Lambda$ & Radar cross section & $1\,\text{m}^2$ \\ \hline
$T_s$ & Chirp duration & $5\,\mu\text{s}$ \\ \hline
$\kappa$      & Ricean factor of BS–RIS and RIS–user channels & 3 dB \\ \hline
$\sigma_v^2$  & Thermal noise generated by active RIS elements & -40 dBm \\ \hline
$\sigma^2, \sigma_s^2$ & Noise power at receivers and sensors & -40 dBm \\ \hline
$\varepsilon$, $\varepsilon_\text{AO}$ & Convergence accuracy & $10^{-3}$ \\ \hline

\end{tabular}

\vspace{-0.1em}
\normalsize
\end{table} 

We relaxed  optimization \textbf{Algorithm~\ref{alg:P1}} and \textbf{Algorithm~\ref{alg:P2}} requires a computational complexity of $\mathcal{O}\!\left( \sqrt{A_l + A_q} \left( A_v + A_l + A_q \right) A_v^{2} \right)$ per iteration\cite{Mohammadi:FD-CF-MIMO:JSAC:2023},\cite{Tam:Load-Balancing:TWC:2017}, where  $A_v$ denotes the total number of real-valued scalar optimization variables, while $A_l$ and $A_q$ correspond to the numbers of linear and quadratic constraints, respectively. These variables are shown in \textbf{Table~\ref{table:comlexity}} for all considered optimization problems.

\vspace{-0.5em}
\section{Numerical Results} \label{sec:Numerical}
In this section, the numerical results obtained through 
Monte-Carlo simulations are provided to evaluate the performance of the proposed hybrid STAR-RIS-enabled system and the associated optimization scheme.

The hybrid STAR-RIS is positioned at $(0,0,8)\,\mathrm{m}$, while the BS is located at $(-30,-30,15)\,\mathrm{m}$. The target is placed at $(5,-5,\,5)\,\mathrm{m}$. The CUs are randomly distributed within a distance range of $20$--$50\mathrm{m}$ from the transmission side of the STAR-RIS. Similarly, the EUs are randomly distributed within $10$--$20\mathrm{m}$ from the transmission side of the STAR-RISs. For the number of RIS elements, we set $N_h=8$ and increase $N_v$ linearly with $N$. Similarly, for the number of  sensors, we set $N_{s,x}=3$ and increase $N_{s,y}$ linearly with $N_s$. The carrier frequency is $f_c=2.4$ GHz. Without loss of generality, we assume that all CUs have the same QoS requirement, i.e., $\gamma_{k_c}^{\mathrm{min}}=\gamma^\mathrm{min}$. In the nonlinear EH model, the maximum harvested power is set to $\phi = 4\,\mu\text{W}$, with shaping parameters $\xi = 1.5 \times 10^6$ and $\chi = 2.2 \times 10^{-6}$ governing the transition. Other simulation parameters appear in \textbf{Table~\ref{parameters}}.

\begin{figure}[t]
    \centering
    \includegraphics[width=0.43\textwidth]{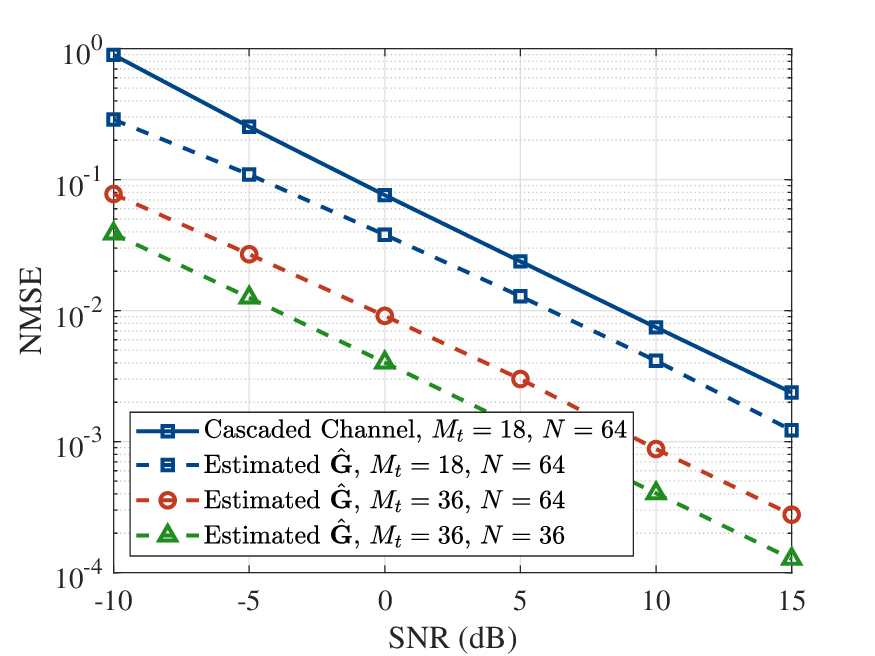}
    \caption{ NMSE performance of the proposed channel estimation method versus the SNR.}
    \label{fig:placeholder}
\end{figure}

Figure~\ref{fig:placeholder} evaluates the normalized mean-square error (NMSE) performance of the proposed channel estimation scheme under different system configurations. For the case of $M_t=18$ and $N=64$, the estimated factor matrix $\hat{\mathbf{G}}$ is evaluated after applying a column-wise scaling alignment to compensate for the inherent scaling ambiguity of the PARAFAC decomposition. As observed from the figure, the NMSE of the cascaded channel is higher than that of $\hat{\mathbf{G}}$. This is because the cascaded channel is reconstructed from the product of two estimated channel factors, such that the scaling ambiguities and residual estimation errors associated with both factors are propagated and accumulated during the reconstruction process. Furthermore, for a fixed number of STAR-RIS elements ($N=64$), increasing the number of BS antennas improves the channel estimation accuracy. This improvement can be attributed to the higher tensor rank and the additional observation diversity provided by a larger number of BS antennas, which facilitate more reliable tensor factorization. Specifically, increasing $M_t$ from 18 to 36 reduces the NMSE by approximately $76.7\%$. In addition, for a fixed number of BS antennas ($M_t=36$), reducing the number of STAR-RIS elements from $N=64$ to $N=36$ leads to a lower-dimensional channel estimation problem. As a result, the tensor recovery process becomes less challenging, yielding an NMSE improvement of approximately $57.6\%$. These results demonstrate that the proposed channel estimation scheme can achieve reliable estimation performance across different BS antenna and STAR-RIS configurations.

Figure~\ref{fig:MUSIC_AoA} illustrates the MUSIC spectrum under different sensor-array sizes and sensing SINR levels. With a small array $(N_s = 3\times3)$, the spectrum exhibits a broad mainlobe and noticeable sidelobes, leading to a clear AoA estimation bias. This is because the limited aperture results in poor angular resolution and stronger angle coupling in the two angles dimensional  domains. Increasing the array size to $5\times5$ sharpens the peak and suppresses sidelobes, thereby yielding an estimate that is nearly aligned with the ground truth. Moreover, a higher sensing SINR improves the separation between signal and noise subspaces, producing a cleaner spectrum and substantially reducing estimation error even with a smaller array. These results confirm that sensing aperture and SINR critically affect MUSIC resolution, with larger arrays or higher SINR consistently improving AoA accuracy.

\begin{figure*}[t]
    \centering
    \begin{subfigure}[b]{0.32\textwidth}
        \includegraphics[width=\textwidth,trim=0 0 0 10, clip]{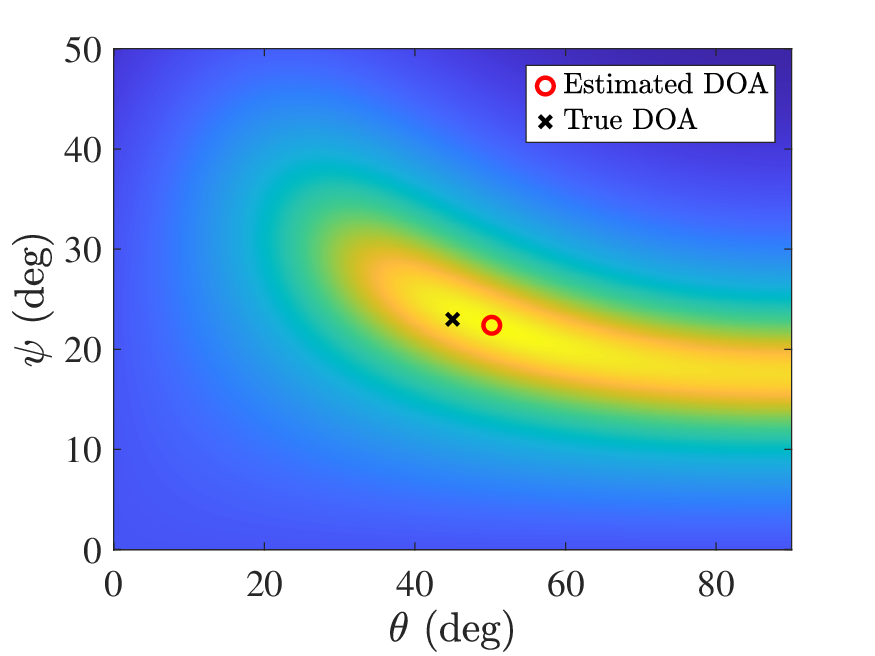}
        \caption{\small $N_s = 3 \times 3$, $\gamma_s = 8$ dB}
        \label{fig:MUSIC_8_33}
    \end{subfigure}
    \hfill
    \begin{subfigure}[b]{0.32\textwidth}
        \includegraphics[width=\textwidth,trim=0 0 0 10, clip]{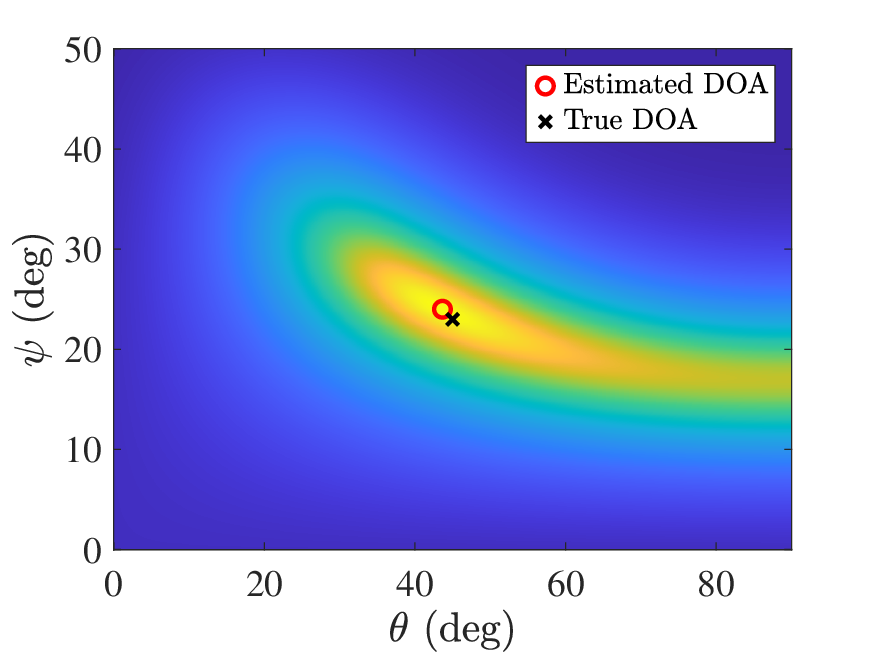}
        \caption{\small $N_s = 5 \times 5$, $\gamma_s = 8$ dB}
        \label{fig:MUSIC_8_55}
    \end{subfigure}
    \hfill
    \begin{subfigure}[b]{0.32\textwidth}
        \includegraphics[width=\textwidth,trim=0 0 0 10, clip]{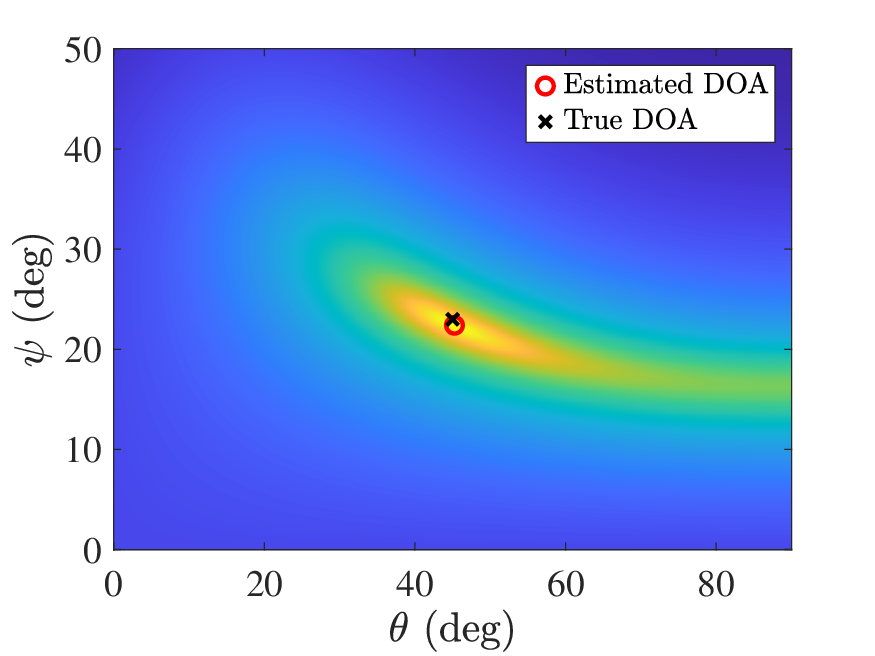}
        \caption{\small $N_s = 3 \times 3$, $\gamma_s = 15$ dB}
        \label{fig:MUSIC_15_33}
    \end{subfigure}
    \caption{\small MUSIC spectrum for AoA estimation under varying array sizes and sensing SINRs ($M=48$, $N=64$).}
    \label{fig:MUSIC_AoA}
    \vspace{-0.1em}
\end{figure*}

\begin{figure*}[t]
\vspace{-0.2cm}
    \centering
    \begin{minipage}[t]{0.32\textwidth}
        \centering
        \includegraphics[width=\textwidth,trim=0 -10 0 17,clip]{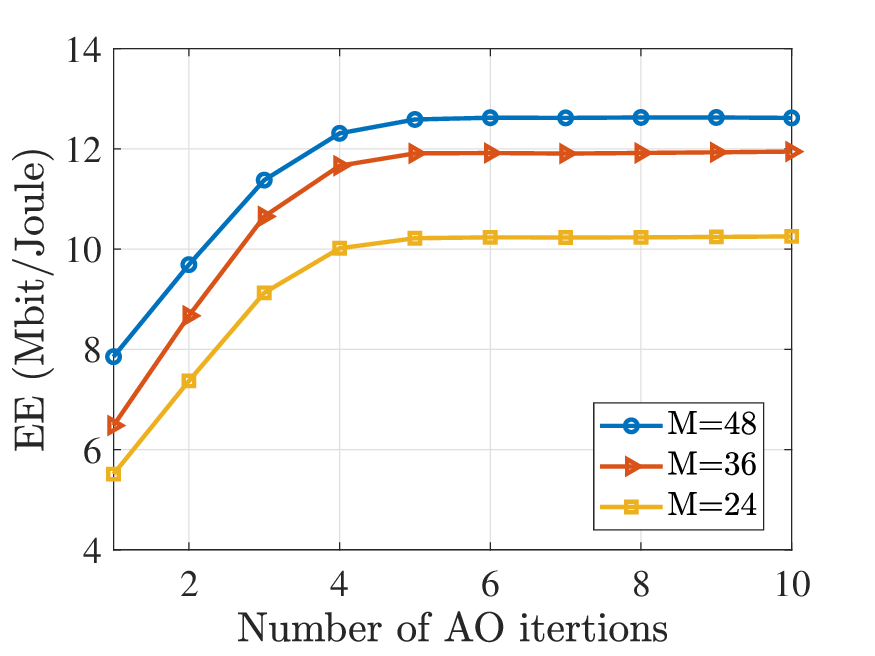}
        \vspace{-1.9em}
        \caption{\small  Convergence of \textbf{Algorithm~\ref{alg:AO}} ($N=64, K_p=2, K_c=2$).}
        \label{fig:AO_iter}
    \end{minipage}
    \hfill
    \begin{minipage}[t]{0.32\textwidth}
        \centering
        \includegraphics[width=\textwidth,trim=0 -10 0 17,clip]{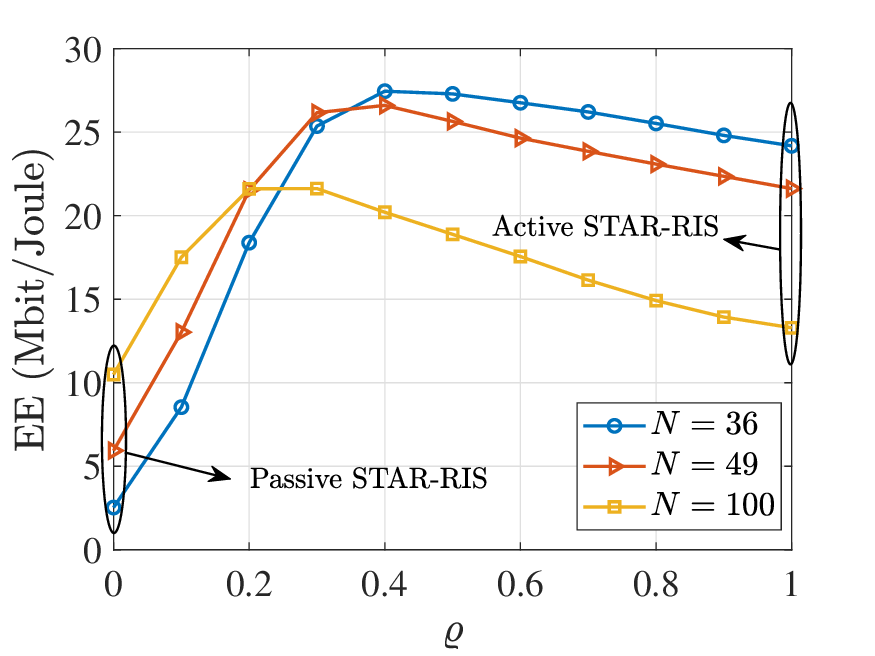}
        \vspace{-1.9em}
        \caption{\small  EE versus the active elements' ratio ($M=100, K_p=2, K_c=2$).}
        \label{fig:EE_ratio}
    \end{minipage}
    \hfill
    \begin{minipage}[t]{0.32\textwidth}
        \centering
        \includegraphics[width=\textwidth,trim=0 -10 0 17,clip]{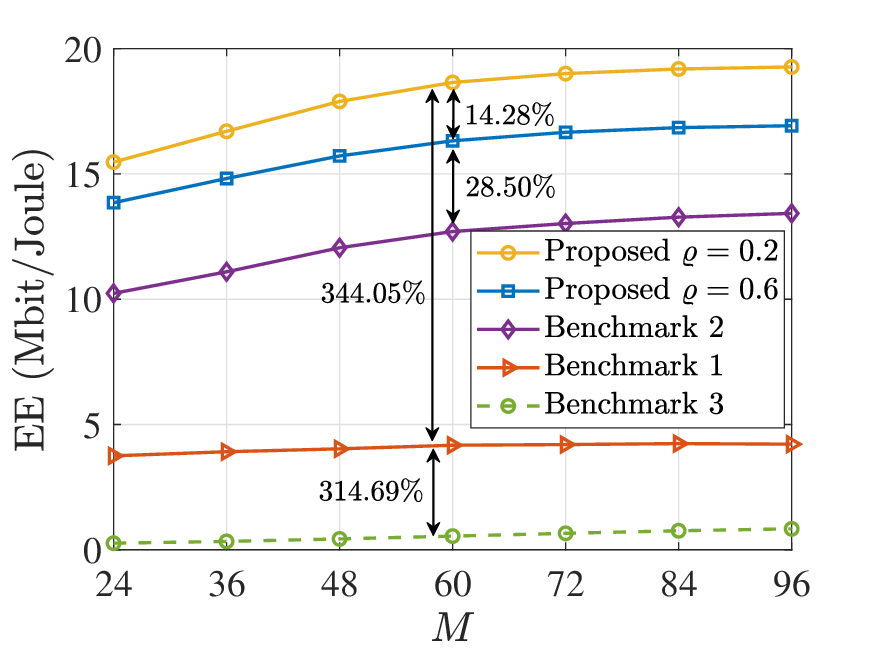}
        \vspace{-1.9em}
        \caption{\small  Performance under different activation strategies ($N=64, K_p=2, K_c=2$).}
        \label{fig:EEM}
    \end{minipage}
\vspace{-1.6em}
\end{figure*}
Figure~\ref{fig:AO_iter} presents the convergence performance of the proposed  \textbf{Algorithm~\ref{alg:AO}} under different BS antenna configurations with $M=24$, $36$, $48$. The algorithm demonstrates fast and stable convergence, reaching near-optimal EE within approximately $4$–$5$ iterations for all cases, confirming the efficiency of the Dinkelbach–SCA hybrid structure.
Increasing the number of BS antennas significantly improves the EE since a larger array provides stronger beamforming gain and more spatial degrees of freedom for interference mitigation and RIS-assisted cascaded channel shaping.
Furthermore, the EE increases monotonically throughout the AO iterations without oscillations, highlighting the robustness of the proposed convex approximations and the block coordinate ascent structure. These results collectively confirm that \textbf{Algorithm~\ref{alg:AO}} is both computationally efficient and stable.

Figure~\ref{fig:EE_ratio} shows how the active-element ratio influences the system EE, leading to two key observations: i) Increasing the number of active elements initially boosts the EE because moderate activation significantly enhances STAR-RIS controllability and improves SINR. It is worth noting that $\varrho=1$ corresponds to the fully active STAR-RIS architecture, in which all STAR-RIS elements operate in the active mode. Across all considered $N$ values, the EE rises sharply as $\varrho$ increases from $0$ to about $0.2$. For example, when $N = 100$, the EE grows from below $5$~Mbit/Joule to over $20$~Mbit/Joule. Beyond this point, the SINR gains saturate, and the additional hardware power consumed by active components—such as amplifiers, bias circuits, and related circuitry—begins to dominate the total power consumption. Consequently, further activation beyond $\varrho = 0.3$ yields only marginal EE gains, while hardware power continues to escalate. Ultimately, the  EE declines from its peak ($22$--$26$~Mbit/Joule) to below $15$~Mbit/Joule as $\varrho$ approaches $1$, highlighting that hardware overhead imposes a fundamental limit on efficiency.
ii) The optimal active ratio varies with the RIS size. As $N$ increases, the EE peak occurs at a smaller active ratio because the SINR gain from additional active elements is limited, while their circuit power consumption grows. For instance, the $N = 49$ configuration achieves its maximum EE at a slightly higher ratio (around $\varrho = 0.3$), as a smaller RIS surface requires more active elements. In contrast, the larger $N = 100$ surface reaches peak EE with only $\varrho = 0.2$. The gap between the EE peak and the EE at $\varrho = 1$ widens with increasing $N$, indicating that large STAR-RISs should use a moderate rather than fully active ratio.

Figure~\ref{fig:EEM} shows the EE performance versus the number of BS antennas under different activation ratios for various designs. Specifically, the following designs are considered:
\begin{itemize}
    \item \textbf{Proposed:} Hybrid STAR-RIS configured using \textbf{Algorithm~\ref{alg:AO}} under varying activation ratios.
    \item \textbf{Benchmark~1:} Hybrid STAR-RIS with random phase-shift design and  power allocation.
    \item \textbf{Benchmark~2:} Hybrid STAR-RIS with random active-element selection, while retaining optimized phases and transmit power levels from \textbf{Algorithm~\ref{alg:AO}}.
    \item \textbf{Benchmark~3:} Fully passive STAR-RIS with phase shift and power allocation design given by \textbf{Algorithm~\ref{alg:AO}}.
\end{itemize}
From Fig.~\ref{fig:EEM}, as the array size increases, all schemes show consistent EE improvement due to enhanced array gain. The proposed hybrid STAR-RIS design with activation ratios $\varrho=0.6$ and $\varrho=0.2$ achieves the highest EE across the antenna range, outperforming all benchmarks. Notably, $\varrho=0.2$ delivers the best overall EE, indicating that activating a moderate portion of elements offers an optimal trade-off between communication gains and additional power consumption, consistent with Fig. 4. \textbf{Benchmark~1} shows marginal EE growth with larger arrays, while \textbf{Benchmark~3} remains insensitive to BS antenna size due to the absence of adaptive optimization. \textbf{Benchmark~2} further reveals that random active-element selection significantly limits EE performance. These results confirm the \textbf{proposed} design’s superiority and underscore the need for efficient activation-ratio control in large-scale STAR-RIS systems.

Figure~\ref{fig:EESSINR} illustrates the EE versus the minimum sensing SINR threshold $\gamma_s^{\text{min}}$. The \textbf{TB} scheme applies tensor-based precoding via \textbf{Algorithm~\ref{alg:parafac}}, while \textbf{NP} serves as a non-precoding benchmark. \textbf{TB} consistently outperforms \textbf{NP} across all feasible $\gamma_s^{\text{min}}$ values, confirming the advantage of tensor-based precoding in jointly managing sensing and communication resources. Furthermore, with a fixed $\varrho=0.2$, the $N=49$ configuration achieves higher EE than $N=64$, as activating more elements raises power consumption, offsetting the rate gain. For the \textbf{TB} scheme, the EE exhibits a two-stage behavior with respect to $\gamma_s^{\text{min}}$. In the low-$\gamma_s^{\text{min}}$ regime, EE remains nearly flat, as the sensing SINR constraint is inactive or weakly active and can be satisfied without extra transmit power or STAR-RIS phase shifts. As $\gamma_s^{\text{min}}$ increases, the sensing constraint becomes active and dominates resource allocation, requiring additional power and phase adjustments to enhance the sensing link, leading to rapid EE degradation. For \textbf{NP}, adding RIS elements partly offsets the lack of precoding, giving better performance at $N=64$ than $N=49$. However, under strict sensing demands (high $\gamma_s^{\text{min}}$), \textbf{NP} becomes infeasible, while \textbf{TB} remains feasible over a wider range.

Figure~\ref{fig:fig8} shows the 3D RMSE for joint AoA and range estimation as a function of the sensing SINR, considering different array sizes and snapshot counts. All curves decrease monotonically with SINRs, reflecting the transition from a noise-dominated regime at low SINR to one where accuracy approaches the root-CRB at high SINRs.
Increasing the array size from $3\times3$ to $4\times4$ yields significant gains, especially in low and moderate SINR regions, due to the larger spatial aperture and narrower beamwidth, which enhance the MUSIC estimator’s angular resolution. Similarly, using more snapshots improves performance by making the sample covariance matrix a closer approximation of the true covariance, an effect most evident at medium SINRs.
At sufficiently high SINRs, all curves converge as angular estimation errors become negligible, leaving FMCW range resolution as the dominant factor. These results underscore the complementary roles of spatial aperture and temporal averaging in improving sensing accuracy for the considered ISAC system.

\begin{figure}[t]
    \centering
\includegraphics[width=0.41\textwidth]{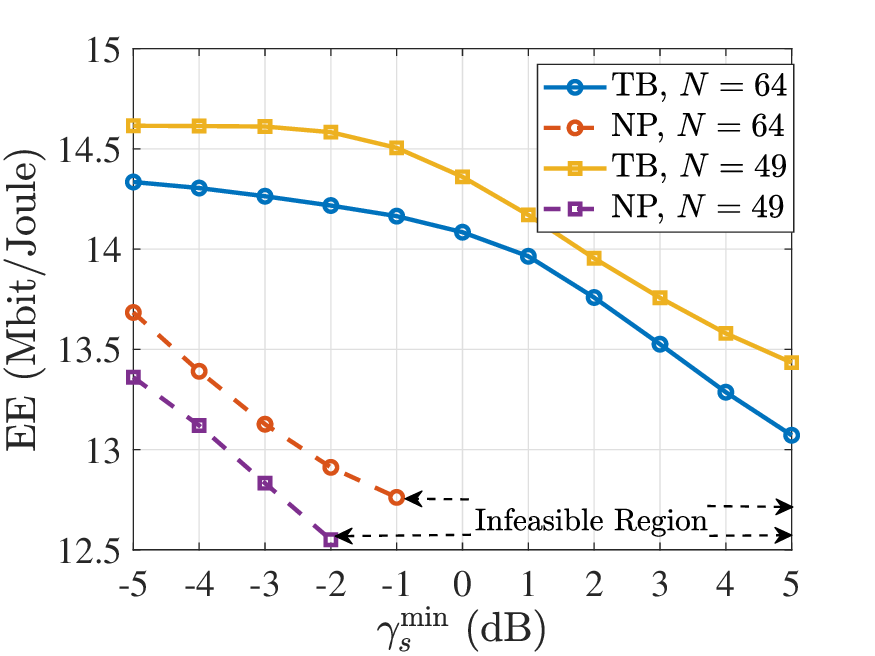}
\vspace{-0.4em}
    \caption{\small  EE versus the sensing SINR threshold ($M=36$,  $K_c=2$, $K_p=2$).}
    \label{fig:EESSINR}
    \vspace{-1em}
\end{figure}

\vspace{-0.5em}
\section{Conclusions} \label{sec:Conclusion}
We have developed a unified hybrid STAR-RIS-assisted framework to simultaneously support communication, localization, and WPT. A PARAFAC-based channel estimation method was introduced to decouple the cascaded channels with low training overhead, which enabled accurate tensor-aided precoding for both communication and sensing. To maximize the system EE, a joint optimization problem was formulated that incorporated communication QoS requirements, sensing SINR constraints for target detection, and nonlinear energy harvesting characteristics. This problem was addressed through an alternating procedure that combined fractional programming, SCA, and a multi-seed greedy strategy for active-element selection.
Numerical results verified that the proposed approach significantly improves the EE, while maintaining reliable sensing and power-transfer performance. The presented framework provided a scalable and effective solution for hybrid STAR-RIS-enabled integrated systems and offered valuable insights for future designs of multi-functional reconfigurable intelligent surfaces in beyond-5G and 6G networks.

\begin{figure}[t]
    \centering
\includegraphics[width=0.41\textwidth]{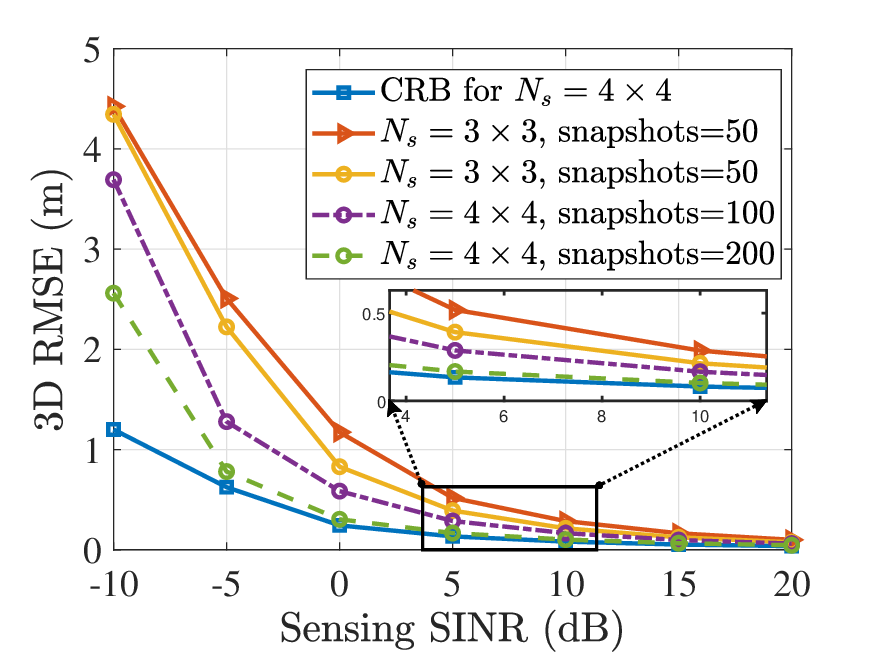}
\vspace{-0.5em}
        \caption{\small 3D RMSE versus the sensing SINR for varying array sizes and snapshot numbers ($M=48$, $N=64$, $K_c=2$, $K_p=2$).}
    \label{fig:fig8}
    \vspace{0.3em}
\end{figure}

\vspace{-0.2em}
\appendices
\vspace{-0.5em}
\section{Derivation of the FIM Elements\label{pro_CRB}}
By vectorizing the
received echo signal in~\eqref{ys_T}, we obtain
\begin{equation}
 \text{vec}(\qY_s)=\text{vec}(\qH_s\qS) + \text{vec}(\tilde{\qN}_s)  \sim \mathcal{CN}(\qu, \qR_n),
\end{equation}
where $\mathbf{u} =\text{vec}(\qH_s \qS)$ and $\qR_n=\sigma_s^2\qI_{N_sT}$\cite{He:RSS-Localization:IOT:2022},\cite{He:CF-RSS-Localization:TWC:2025}.
The $(i,j)$-th entry of the FIM is given by
\begin{equation}
    [\qJ]_{ij} = \frac{2}{\sigma_s^2}\Re\!\left\{
    \frac{\partial \mathbf{u}^H}{\partial {\zeta}_i}
    \frac{\partial \mathbf{u}}{\partial {\zeta}_j}
    \right\}, \quad\forall i,j \in \{1,2,3,4,5\}.
\end{equation}
 We further define $\boldsymbol{\vartheta}=[\theta,\psi]^T$ and  $\qU\!=\! \qb_s(\theta,\psi)\ \qa^T_\text{RT}(\theta_{\text{D}}^{\text{RT}}, \psi_{\text{D}}^{\text{RT}})\mathbf{\Theta}_r\qG$. Then, it follows that
\begin{equation}
    \frac{\partial\qu }{\partial\bm\zeta}\!=\!\!\big[
    \beta_s\text{vec}(\dot{\qU}_{\vartheta}\qS),     
        \beta_s\text{vec}(\qU\dot{\qS}),
        [1,j]\text{vec}(\qU\qS),
    \big]
\end{equation}
where $\dot{\qU}_{\vartheta}=\frac{\partial\qU}{\partial\vartheta}$ and $\dot{\qS}=\frac{\partial\qS}{\partial\tau}$. We define the covariance matrix of the transmit signal~\eqref{eq:signal}, as $\qR=\Ex\{\qx(t)\qx(t)^H\}$. The FIM matrix elements can be obtained as
\vspace{0.2em}
\begin{align}
{J}_{\vartheta\vartheta}
&=\frac{2|\beta_s|^2T}{\sigma_s^2}
\Re\big(\tr\big(\dot{\qU}_{\vartheta}\qR\dot{\qU}_{\vartheta}^H\big)\big), \vartheta\in\{\theta,\psi\},\\
J_{\vartheta\tau}&=\frac{2|\beta_s|^2T}{\sigma_s^2}\Re\big(\qS^H\dot{\qU}_{\vartheta}^H\qU\dot{\qS} \big),\vartheta\in\{\theta,\psi\},
\\
 J_{\vartheta\beta_R}&=\frac{2T}{\sigma_s^2}\Re\big(\beta_s\tr(\dot{A}_{\theta}\qR\qU^H)^H \big),\vartheta\in\{\theta,\psi\},
\\
J_{\vartheta\beta_I}&=\frac{2T}{\sigma_s^2}\Re\big(j\beta_s\tr(\dot{A}_{\theta}\qR\qU^H)^H \big),\vartheta\in\{\theta,\psi\},
\\
J_{\tau\tau}&=\frac{2|\beta_s|^2T}{\sigma_s^2}\Re\big(\tr(\qU^H\qU\dot{\qS}\dot{\qS}^H)\big),\\
J_{\beta_R\beta_R}&=J_{\beta_I\beta_I}=\frac{2T}{\sigma_s^2}\Re\big(\tr(\qU\qR\qU^H)\big),\\
J_{\beta_R\beta_I}&=0.
\end{align}

\vspace{-1em}
\bibliographystyle{IEEEtran}
\bibliography{Reference}

\vspace{-3em}
\begin{IEEEbiography}[{\includegraphics[width=1in,height=1.25in,clip,keepaspectratio]
{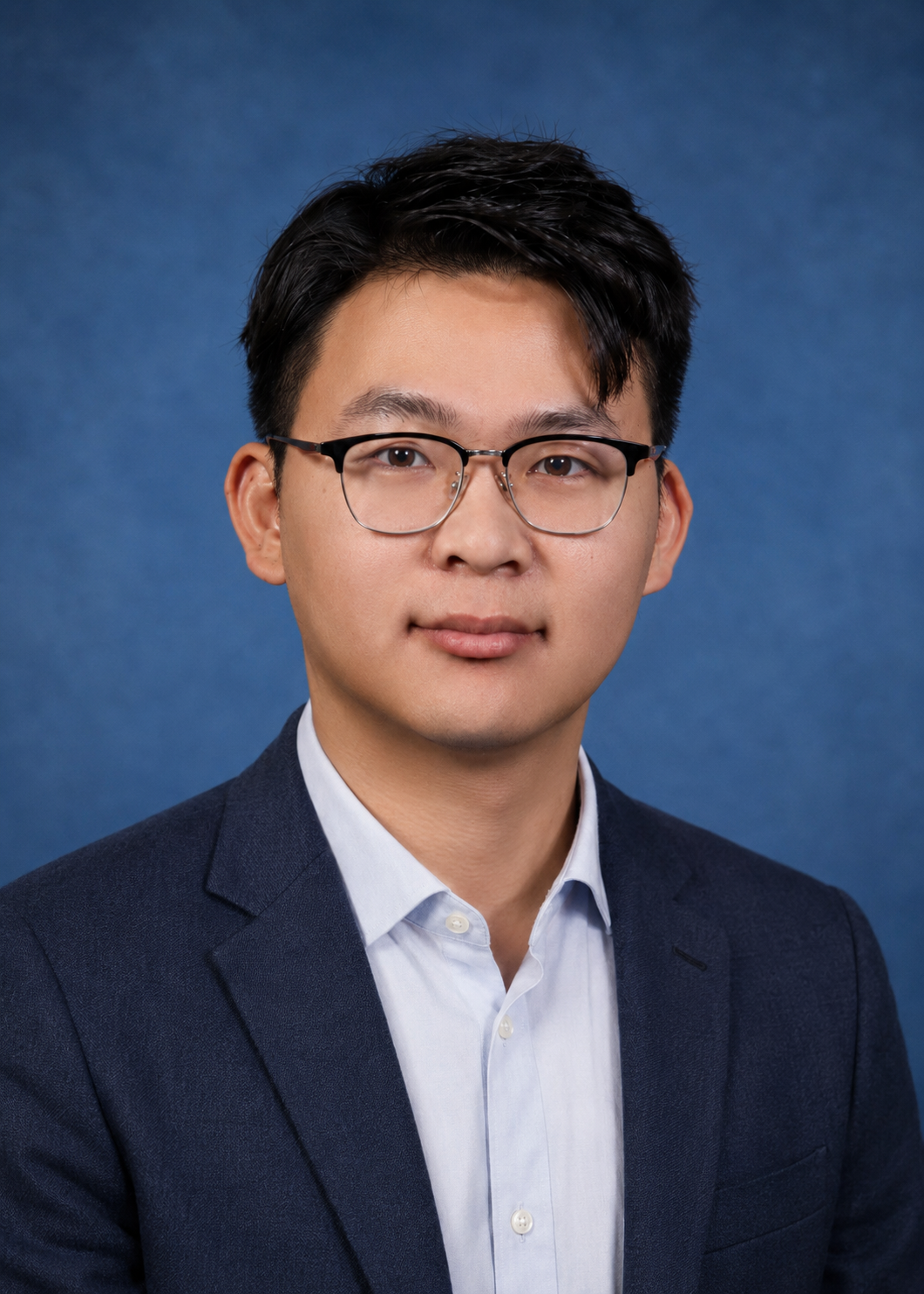}}]
{Haoran Ni} received the B.S. degree from the University of Huddersfield, U.K., in 2022, and the M.S. degree from Durham University, U.K., in 2023. He is currently pursuing the Ph.D. degree with the Centre for Wireless Innovation (CWI), Queen's University Belfast, U.K. His research interests include integrated sensing and communication (ISAC) and simultaneously transmitting and reflecting reconfigurable intelligent surfaces (STAR-RISs).
\end{IEEEbiography}

\vspace{-2em}
\begin{IEEEbiography}[{\includegraphics[width=1in,height=1.25in,clip,keepaspectratio]
{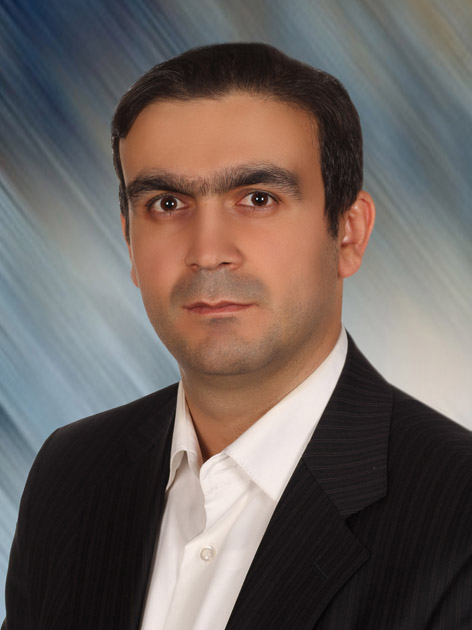}}]
{Mohammadali Mohammadi} (Senior Member, IEEE) is currently a Lecturer with the Centre for Wireless Innovation (CWI), Queen’s University Belfast, U.K. Previously, he held the position of
a Research Fellow with CWI from 2021 to 2024.
He has published more than 90 research papers in
accredited international peer-reviewed journals and
conferences in the area of wireless communication
and has co-authored three invited book chapters.
His research interests include signal processing for
wireless communications, cell-free massive MIMO,
integrated sensing and communications (ISAC), reconfigurable intelligent surfaces, and physical layer security. He received the Exemplary Reviewer Award from \textsc{IEEE Transactions on Communications} (2020 and 2022) and \textsc{IEEE Communications Letters} (2023) and the Exemplary Editor Award from \textsc{IEEE Communications Letters} in 2025. He serves as the Editor for the \textsc{IEEE Transactions on Communications}, \textsc{IEEE Communications Letters}, \textsc{IEEE Open Journal of the Communications Society}, the Digital Signal Processing, and the Physical Communication
(Elsevier).
\end{IEEEbiography}

\vspace{-3em}
\begin{IEEEbiography}[{\includegraphics[width=1in,height=1.25in,clip,keepaspectratio]
{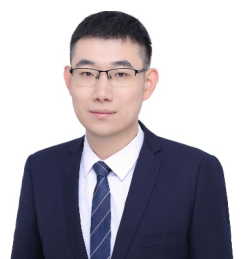}}]
{Xidong Mu} (Member, IEEE, \url{https://xidongmu.github.io/}) received the Ph.D. degree in Information and Communication Engineering from the Beijing University of Posts and Telecommunications, Beijing, China, in 2022. He was with the School of Electronic Engineering and Computer Science, Queen Mary University of London, from 2022 to 2024, where he was a Postdoctoral Researcher. He has been a lecturer (an assistant professor) with the Centre for Wireless Innovation (CWI), School of Electronics, Electrical Engineering and Computer Science, Queen’s University Belfast, U.K. since August 2024. His research interests include flexible-antenna technologies, reconfigurable surface aided communications, next generation multiple access (NGMA), integrated sensing and communications, and optimization theory. 

Xidong Mu is a Web of Science Highly Cited Researcher. He received the IEEE ComSoc Outstanding Young Researcher Award for EMEA region in 2023 and the IEEE ComSoc Wireless Communications Technical Committee (WTC) Outstanding Young Researcher Award in 2025. He is the co-recipient of the 2024 IEEE Communications Society Heinrich Hertz Award, the Best Paper Award in ISWCS 2022, the 2022 IEEE Signal Processing and Computing for Communications Technical Committee (SPCC-TC) Best Paper Award, and the Best Student Paper Award in IEEE VTC2022-Fall. He serves as the secretary of the IEEE ComSoc Technical Committee on Cognitive Networks (TCCN), the secretary of the IEEE ComSoc SPCC-TC, and the URSI UK Early Career Representative (ECR) for Commission C. He also serves as the Assistant to Editor-in-Chief of \textsc{IEEE Transactions on Green Communications and Networking} and an Editor of \textsc{IEEE Transactions on Communications}. 

\end{IEEEbiography}

\begin{IEEEbiography}[{\includegraphics[width=1in,height=1.25in,clip,keepaspectratio]{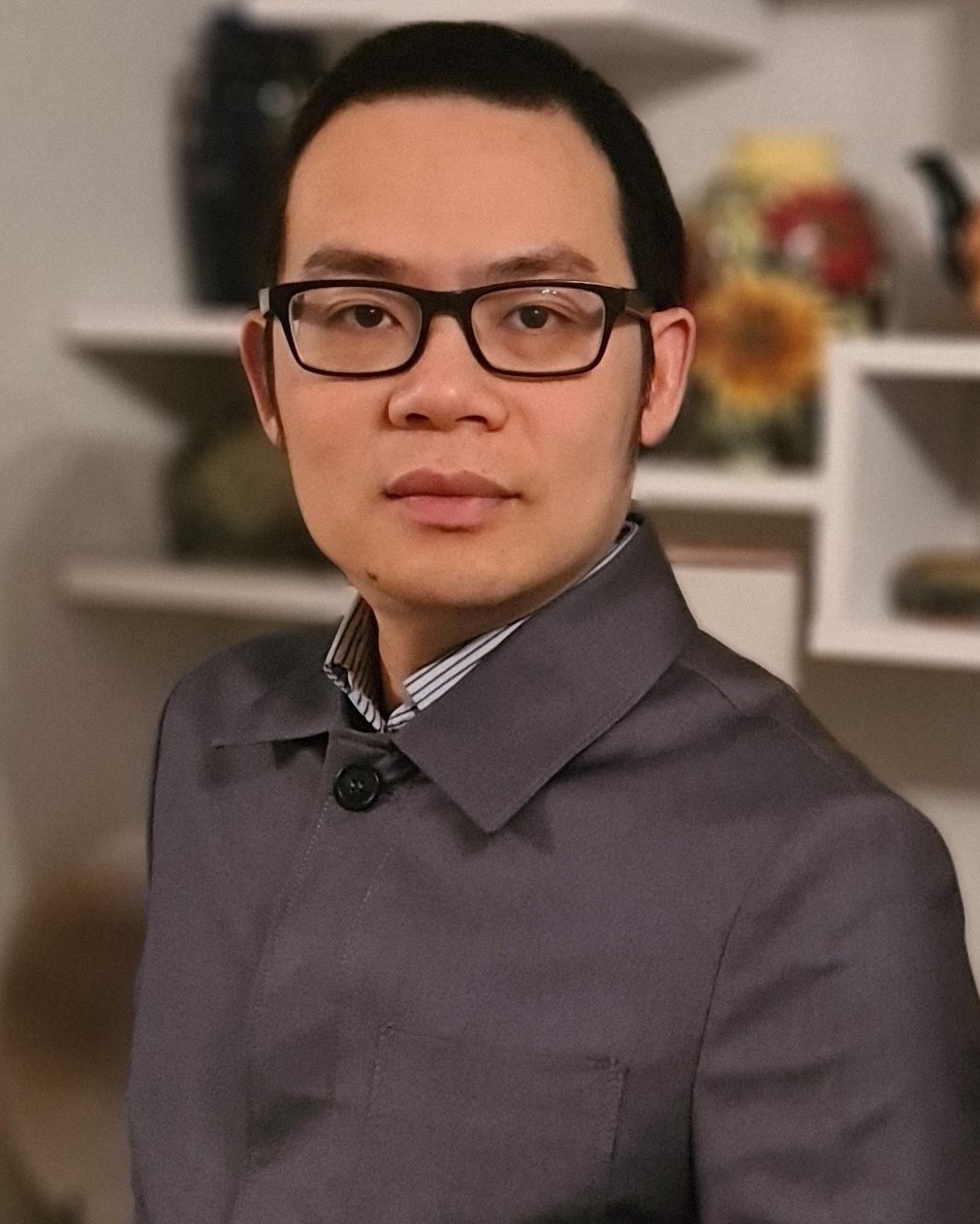}}]
{Hien Quoc Ngo} (Fellow, IEEE)  is currently a Reader with Queen's University Belfast, U.K. His main research interests include massive MIMO systems, cell-free massive MIMO, reconfigurable intelligent surfaces, physical layer security, and cooperative communications. He has co-authored many research papers in wireless communications and co-authored the Cambridge University Press textbook \emph{Fundamentals of Massive MIMO} (2016).

He received  the IEEE ComSoc Test of Time Paper Award for Advances in Communications in 2026,  the IEEE ComSoc Stephen O. Rice Prize in 2015, the IEEE ComSoc Leonard G. Abraham Prize in 2017, the Best Ph.D. Award from EURASIP in 2018, and the IEEE CTTC Early Achievement Award in 2023. He also received the IEEE Sweden VT-COM-IT Joint Chapter Best Student Journal Paper Award in 2015. He was awarded the UKRI Future Leaders Fellowship in 2019. He serves as the Editor for the IEEE Transactions on Wireless Communications, IEEE Transactions on Communications, the Digital Signal Processing, and the Physical Communication (Elsevier). He was an Editor of the IEEE Wireless Communications Letters, a Guest Editor of IET Communications, and a Guest Editor of IEEE ACCESS in 2017.
\end{IEEEbiography}

\vspace{-2em}
\begin{IEEEbiography}[{\includegraphics[width=1in,height=1.35in,clip,keepaspectratio]{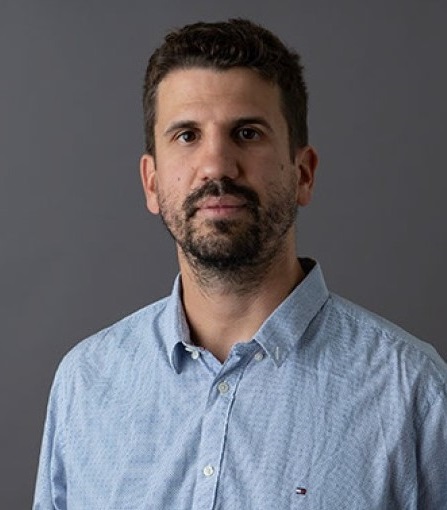}}]
{Michail Matthaiou}(Fellow, IEEE) obtained his Ph.D. degree from the University of Edinburgh, U.K. in 2008. 
He is currently a Professor of Communications Engineering and Signal Processing and Deputy Director of the Centre for Wireless Innovation (CWI) at Queen’s University Belfast, U.K. He is also an Eminent Scholar at the Kyung Hee University, Republic of Korea. He has held research/faculty positions at Munich University of Technology (TUM), Germany and Chalmers University of Technology, Sweden. His research interests span signal processing for wireless communications, beyond massive MIMO, reflecting intelligent surfaces, mm-wave/THz systems and AI-empowered communications.

Dr. Matthaiou and his coauthors received the IEEE Communications Society (ComSoc) Leonard G. Abraham Prize in 2017. He currently holds the ERC Consolidator Grant BEATRICE (2021-2026) focused on the interface between information and electromagnetic theories. To date, he has received the prestigious 2023 Argo Network Innovation Award, the 2019 EURASIP Early Career Award and the 2018/2019 Royal Academy of Engineering/The Leverhulme Trust Senior Research Fellowship. His team was also the Grand Winner of the 2019 Mobile World Congress Challenge. He was the recipient of the 2011 IEEE ComSoc Best Young Researcher Award for the Europe, Middle East and Africa Region and a co-recipient of the 2006 IEEE Communications Chapter Project Prize for the best M.Sc. dissertation in the area of communications. He has co-authored papers that received best paper awards at the 2018 IEEE WCSP and 2014 IEEE ICC. In 2014, he received the Research Fund for International Young Scientists from the National Natural Science Foundation of China. He is currently the Editor-in-Chief of Elsevier Physical Communication, a Senior Editor for \textsc{IEEE Wireless Communications Letters} and \textsc{IEEE Signal Processing Magazine}, an Area Editor for \textsc{IEEE Transactions on Communications} and Editor-in-Large for \textsc{IEEE Open Journal of the Communications Society}. 
\end{IEEEbiography}

\end{document}